\documentclass[11pt,letterpaper]{article}

\usepackage[margin=1in]{geometry}
\usepackage{setspace}

\usepackage[linesnumbered,ruled,vlined]{algorithm2e}

\usepackage{algorithmicx}
\usepackage{algpseudocode}
\usepackage{aliascnt}
\usepackage{todonotes}
\usepackage{pdfpages}
\usepackage[margin=1in]{geometry}
\usepackage{setspace}
\usepackage{parskip}
\usepackage[colorlinks=true,
            linkcolor=blue,
            citecolor=green,
            urlcolor=green]{hyperref}

\usepackage{amsthm}
\usepackage{amsmath}
\DeclareMathOperator{\poly}{poly}

\usepackage{times}
\usepackage{microtype}
\usepackage{enumitem}
\usepackage{titlesec}
\usepackage{soul} 

\usepackage{amsmath, amssymb}
\usepackage{bbm}

\AtBeginDocument{%
  }

\makeatletter
\newcommand{\prob}[1]{%
  \@ifnextchar\bgroup{\prob@two{#1}}{\prob@one{#1}}%
}
\newcommand{\prob@one}[1]{\mathbb{P}\left[#1\right]}
\newcommand{\prob@two}[2]{\mathbb{P}_{#1}\left[#2\right]}
\makeatother
\makeatletter
\renewcommand{\exp}[1]{%
  \@ifnextchar\bgroup{\exp@two{#1}}{\exp@one{#1}}%
}
\newcommand{\exp@one}[1]{\mathbb{E}\left[#1\right]}
\newcommand{\exp@two}[2]{\mathbb{E}_{#1}\left[#2\right]}
\makeatother

\newcommand{\eps}{\varepsilon}

\newtheorem{theorem}{Theorem}[section]

\newaliascnt{lemma}{theorem}
\newtheorem{lemma}[lemma]{Lemma}
\aliascntresetthe{lemma}

\newaliascnt{observation}{theorem}
\newtheorem{observation}[observation]{Observation}
\aliascntresetthe{observation}

\newaliascnt{claim}{theorem}
\newtheorem{claim}[claim]{Claim}
\aliascntresetthe{claim}

\newaliascnt{definition}{theorem}
\newtheorem{definition}[definition]{Definition}
\aliascntresetthe{definition}

\newaliascnt{remark}{theorem}
\newtheorem{remark}[remark]{Remark}
\aliascntresetthe{remark}
\usepackage{tikz}
\usepackage{cleveref}

\crefname{theorem}{theorem}{theorems}
\Crefname{theorem}{Theorem}{Theorems}

\crefname{lemma}{lemma}{lemmas}
\Crefname{lemma}{Lemma}{Lemmas}

\crefname{observation}{observation}{observations}
\Crefname{observation}{Observation}{Observations}

\crefname{claim}{claim}{claims}
\Crefname{claim}{Claim}{Claims}

\date{}

\title{Faster Sublinear Maximal Independent Set Size}

\author{
Peter Kiss\thanks{This research was funded in whole or in part by the Austrian Science Fund
(FWF) 10.55776/ESP6088024, peter.kiss@univie.ac.at}$^{1}$,
Arash Kooroshnezhad$^{2}$ \thanks{kouroshnejad04@gmail.com} \\
\\
$^{1}$University of Vienna 
$^{2}$University of Warwick
}

\begin{document}
\maketitle
\begin{abstract}

We give a sublinear-time algorithm for estimating the size of a maximal independent set in a graph using adjacency-query access with expected running time $\tilde{O}(n^{1+1/3})$, improving over the previous $\tilde{O}(n^{1+1/2})$ bound of \cite{MahabadiRTV25}. As a consequence of a reduction of \cite{MahabadiRTV25}, this also improves the running time for sublinear metric Steiner forest. We further show that a bi-criteria estimate of the $k$-center objective in general metrics can be obtained via a reduction to maximal independent set size estimation.

\end{abstract}
\section{Introduction}

\label{sec:intro}

A maximal independent set is one of the most fundamental combinatorial objects in graph algorithms. Given a graph $G=(V,E)$, an independent set $I \subseteq V$ is \emph{maximal} if no vertex can be added to $I$ while preserving independence. Equivalently, $I$ is an independent dominating set: every vertex outside $I$ has a neighbor in $I$. This dual nature makes maximal independent sets central to distributed algorithms, approximation algorithms, symmetry breaking, and graph sparsification.

Computing a maximal independent set is a crucial subroutine in a wide range of graph algorithms. It appears prominently in distributed and parallel symmetry breaking \cite{BarenboimEPS16,kcenterbase}, where it underlies algorithms for coloring \cite{PanconesiS92}, ruling sets \cite{KothapalliP12,BarenboimEPS16}, network decompositions \cite{PanconesiS92, AwerbuchGLP89,RozhonG20}, and maximal matching \cite{BalliuBHORS21}. It is also used in approximation algorithms, for example through its interpretation as an independent dominating set \cite{kcenterbase,JainV01}, and in graph sparsification and clustering procedures \cite{BaswanaS07,AhmedBSHJKS20} where one needs to select a well-separated yet covering set of vertices.

Recently, there has been growing interest in algorithms whose running time is sublinear in the size of the input. This line of work has led to sublinear-time algorithms for fundamental graph and metric problems, such as matching \cite{YoshidaYMIH09, KapralovMNT20, Behnezhad21, BehnezhadRR23a, BehnezhadRRS23, BhattacharyaKS23a, AzarmehrBRR25, MahabadiRT25, BehnezhadRR24, BehnezhadRR23b, BhattacharyaKS23b}, metric TSP \cite{ChenKK20, 0039KT23b,BehnezhadRRS24}, minimum spanning tree \cite{CzumajS09, ChazelleRT05}, spanners \cite{LenzenL18, ArvivCLP23,ParterRVY19} and metric Steiner tree \cite{MahabadiRTV25b, 0039KT23,MahabadiRTV25} . In the sublinear regime, however, one is often unable to construct an exact feasible solution. Instead, most work in this model focuses on estimating the size or cost of an optimal solution, using only a small number of queries to the input.

In this paper we focus on the problem of how fast the size of a maximal independent set of the input graph can be estimated within a $(1+\epsilon)$-multiplicative factor. As MIS-based subroutines have found a wide range of applications, it's natural to propose that in a sublinear implementation of an algorithm relying on finding an independent set at some point, this challenge forms a significant bottleneck. Indeed, Mahabadi, Roghani, Tarnawski and Vakilian \cite{MahabadiRTV25} have shown that one can obtain an $O(\log k)$-approximation to the metric Steiner forest problem in time proportional to how fast an independent set size estimation can be obtained.

The maximal independent set problem has already seen significant research interest in sublinear models. Most prior work \cite{YoshidaYMIH09,NguyenO08,000122,RubinfeldTVX11,AlonRVX12, ReingoldV16, LeviRY17,GhaffariU19,EvenMR18,LeviM17} has focused on the following local-computation question: given a vertex $v$, how quickly can one decide whether $v$ belongs to the MIS produced by a fixed algorithm, while inspecting only a small portion of the graph? Only recently the authors of \cite{MahabadiRTV25} have shown the first sublinear running time algorithm for estimating the independent set size globally.

A key question in sublinear-time graph algorithms is the choice of access model. Since the algorithm cannot read the entire graph, its power is determined by the queries it is allowed to make. Two standard models are the \emph{adjacency-list query} model and the \emph{adjacency-matrix query} model. In the adjacency-list model, a query $(v,i)$ returns the $i$th neighbor of vertex $v$ according to a fixed ordering of its adjacency list. In the adjacency-matrix model, a query $(u,v)$ reveals whether the edge $(u,v)$ is present in the graph.

In their seminal work, Yoshida, Yamamoto and Ito ~\cite{YoshidaYMIH09} studied local access to the maximal independent set produced by the random-order greedy algorithm. They showed that the MIS status of a uniformly random vertex can be determined in expected $O(m/n)$ time using adjacency-list queries. The authors of \cite{MahabadiRTV25} later extended this guarantee to the adjacency-matrix query model, showing that the status of a uniformly random vertex can be determined in expected $O(n)$ time.

A natural approach to estimating the size of a solution in the sublinear model is to sample vertices or edges, decide whether each sampled object participates in the output solution, and use the resulting sample average as a global estimator. The authors of \cite{MahabadiRTV25} follow this approach to obtain a $(1+\epsilon)$-approximation to the size of a maximal independent set in $\tilde{O}(n^{1+1/2})$ time. 

At a high level, their algorithm combines two complementary strategies. First, it is a folklore fact that, with adjacency-matrix access, the random-order greedy MIS can be simulated in time $O(S\cdot n)$, where $S$ denotes the size of the output independent set. Thus, when $S=o(\sqrt n)$, the MIS can be constructed explicitly in $\tilde{O}(n^{1+1/2})$ time. Second, when $S = \Omega(\sqrt{n})$, it suffices to sample $\tilde{O}(\sqrt n)$ vertices uniformly at random, decide for each sampled vertex whether it belongs to the greedy MIS, and extrapolate from the sample to estimate the total size.

Unfortunately, this black-box use of the MIS membership algorithm of ~\cite{YoshidaYMIH09} appears to encounter an inherent barrier at $\tilde{O}(n^{1+1/2})$ time. The hard regime is when the output MIS has size roughly $\sqrt n$. In this case, explicitly constructing the solution is already too expensive, while a sampling-based approach is also ineffective: a uniformly random vertex belongs to the MIS with probability only about $1/\sqrt n$, and each membership query costs $\tilde{O}(n)$ time. Thus, finding even a single MIS vertex by repeatedly sampling vertices and testing their membership requires $\Omega(n^{1+1/2})$ time.

Our main contribution is in showing an approach breaking this barrier. The guarantee of our algorithm is summarized in the following theorem.

\begin{theorem}
\label{thm:main}
There exists a sublinear-time algorithm which, given adjacency-matrix query access to an $n$-vertex graph $G$, returns a $(1+\epsilon)$-approximation to the size of the maximal independent set produced by the random-order greedy algorithm on $G$ with high probability in expected running time $\tilde{O}(\frac{n^{1+1/3}}{\epsilon^2})$.
\end{theorem}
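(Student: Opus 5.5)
The plan is to break the $\sqrt n$ barrier by running the random-order greedy MIS in two phases instead of choosing between explicit construction and uniform sampling. Fix a random permutation $\pi$ and a threshold rank $t = n^{2/3}$ (up to polylogarithmic factors). In the first phase we simulate greedy explicitly on the prefix of the first $t$ vertices of $\pi$. In the second phase we estimate how many MIS vertices appear among the remaining ranks, using membership queries of the type of \cite{YoshidaYMIH09, MahabadiRTV25}.

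\textbf{Phase 1: explicit simulation of the prefix.} Process the vertices $\pi_1,\dots,\pi_t$ in order. Maintain the current independent set $I_t$. For each new vertex, test adjacency only against the members of $I_t$, at cost $O(|I_t|)$ per vertex. This gives total cost $O(t\cdot |I_t|)$, which can be too large. We therefore use the dual bookkeeping: whenever a vertex joins $I$, scan its full row in $O(n)$ time and mark all of its neighbors as removed. Marking is cheap, and membership of a later vertex is decided by checking its mark in $O(1)$. The phase then costs $O(n\cdot |I_t| + t)$. This is fine as long as $|I_t| \le n^{1/3}$. If $I$ grows beyond $n^{1/3}$ within the prefix, the final MIS has size $\Omega(n^{1/3})$. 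In that case we fall back to sampling: we need $\tilde O(n^{2/3}/\eps^2)$ samples, each tested in expected $O(n)$ time by \cite{MahabadiRTV25}, for $\tilde O(n^{5/3}/\eps^2)$ total. This exceeds the target, so the fallback must be refined, as explained below.

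\textbf{Phase 2: the residual graph.} After the prefix of length $t$, a standard fact about random greedy applies: every vertex that is still alive has residual degree $\tilde O(n/t)=\tilde O(n^{1/3})$ with high probability. This holds because a vertex with more alive neighbors would, with high probability, have seen one of them sampled and added, or would itself have been removed. The remaining MIS vertices are exactly the greedy MIS of the residual graph $G'$ under the induced order. We then sample uniformly random vertices and determine their status. A sampled vertex is first checked against the marks in $O(1)$, which kills it unless it is alive. If it is alive, we run the Yoshida et al.\ recursive oracle on $G'$. In $G'$ the expected exploration is $O(1+\bar d')$ neighbor lookups, where $\bar d' = \tilde O(n^{1/3})$. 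Each neighbor lookup, however, requires finding alive neighbors through adjacency-matrix queries. Finding all neighbors of a vertex costs $O(n)$, and the plan is to show that the expected number of full row scans per alive query is $O(1)$. The total MIS size is then $|I_t|$ plus the extrapolated residual count. To get a $(1+\eps)$ estimate it suffices to estimate the residual count additively to within $\eps\max(|I_t|, \text{residual})$. The key observation is that the residual MIS contains at least $(\#\text{alive})/(\Delta'+1)$ vertices, so the alive fraction and the residual MIS size are tied together. Sampling $\tilde O(n/(\eps^2 \cdot \text{alive}))$ vertices to hit alive ones, and paying $O(n)$ per alive hit, gives $\tilde O(n^{4/3}/\eps^2)$ when balanced against $\Delta'\le n^{1/3}$.

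\textbf{Main obstacle.} The hardest step is the case where many vertices die early, so that $|I_t|$ is large but the MIS is only moderately large, around $n^{1/3}$ to $n^{1/2}$. The cost of Phase 1, $n|I_t|$, must be charged against the reduced sample count, roughly $n/(\eps^2|I|)$ samples at $O(n)$ each. I would choose the prefix length adaptively: stop the prefix when $|I_t|$ reaches $n^{1/3}$. If the prefix stops early at rank $t'<t$, the residual degree bound is only $n/t'$. In that regime, however, the final answer is at least $n^{1/3}$. We then need only $\tilde O(n^{2/3}/\eps^2)$ samples, and the per-sample cost of the oracle on the residual graph must be bounded by $\tilde O(n^{2/3})$ rather than $n$. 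Proving that per-sample bound is the crux. The approach is to bound the expected number of alive neighbors the oracle explores, using the residual degree bound together with the random-order analysis, and to locate alive neighbors by sampling row entries rather than scanning full rows.
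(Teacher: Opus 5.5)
There is a genuine gap, and it is the step you yourself label the crux. Your first regime works: if $|I_t|\le n^{1/3}$ at rank $t=n^{2/3}$, marking costs $O(n^{4/3})$ and aliveness becomes an $O(1)$ test. Since $\pi_U$ is uniform given the prefix, running $\mathcal{O}_M$ on $G[U]$ costs $O(|U|)$ per alive sample by \Cref{thm:matrix-query}. The bound $|U|/R\le\Delta'+1=\tilde O(n^{1/3})$ then balances the sampling; you do not need the plan about $O(1)$ full row scans there.

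In the early-stop regime, however, you only state the bound you need (per-sample cost $\tilde O(n^{2/3})$) and give no argument for it. The suggestion to \emph{locate alive neighbors by sampling row entries} does not implement the greedy recursion, which must examine the earlier neighbors of the current vertex in $\pi$-order until it reaches one in the MIS. This regime is not a corner case. For a disjoint union of $\sqrt n$-cliques you stop after roughly $n^{1/3}$ ranks with $|U|=\Theta(n)$ and $|S|\approx\sqrt n$. The only per-vertex bound you have, $O(|U|)=O(n)$, then gives $\tilde O(n^{3/2})$, which is exactly the barrier the theorem is meant to break.

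The difficulty comes from the marking step itself. Checking each of the first $t=n^{2/3}$ ranks directly against the current independent set costs at most $\sum_{i\le t}(i-1)\le t^2=n^{4/3}$, so Phase 1 is never too expensive. Without marking, the real cost is that deciding whether a vertex is alive takes up to $|S_b|$ queries. The paper's main technical contribution (\Cref{lem:linear-oracle}) is showing that the recursive oracle survives this.

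The missing idea is a lazy filter. For the current vertex $u\in U$ and a candidate $v$, first query $(u,v)$, and only if it is an edge scan $S_b$ in insertion order. Because $u$ is unbanned at every stage, the residual degree bound at each doubling scale says $u$ has only $\tilde O(n/2^i)$ neighbors that survive the first $2^i$ baseline vertices. Each neighbor banned by the next $2^i$ of them is rejected within $2^{i+1}+1$ queries. Summing over $O(\log n)$ scales gives $\tilde O(n/|V_{i_0}\setminus U|)$ per random spurious candidate (\Cref{cl:U-query:bound}). A negative hypergeometric count shows there are about $\bar X\,|V_{i_0}|/|U|$ spurious candidates, with $\mathbb{E}[\bar X]=O(|U|)$. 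So the oracle costs $\tilde O(n)$ per alive vertex without ever materializing $U$.

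Finally, the paper runs the prefix under a query budget of $2n^{4/3}$, rather than to a fixed rank or a fixed $|I|$. This yields the trade-off $|S_b|\ge n^{2/3}/\sqrt{\lambda+1}$ versus $|U|=\tilde O(n/\lambda)$, and hence total cost $\tilde O(n|U|/|S_b|)=\tilde O(n^{4/3})$.
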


The main technical novelty of our approach is that we do not use the membership oracle of ~\cite{YoshidaYMIH09} as a black box. Instead, we adapt its internal recursive procedure so that it can be initialized with, and guided by, a partially computed maximal independent set.

\paragraph{Applications.}
Mahabadi et al.~\cite{MahabadiRTV25} showed that the metric Steiner forest problem admits an $O(\log k)$-approximation in time proportional to the cost of estimating the size of a maximal independent set under adjacency-matrix query access. Our improved MIS-size estimator therefore accelerates their algorithm as a black box.

We also give a new application to clustering in general metrics. By combining MIS-size estimation with the folklore threshold-graph algorithm (usually attributed to \cite{kcenterbase}), we estimate, within a factor of $2+\epsilon$, the smallest radius for which the metric can be covered by at most $(1+\epsilon)k$ centers (see \Cref{sec:k-center}). For general metrics, such a bi-criteria relaxation is necessary for sublinear dependence on $nk$, since any constant-factor non-bi-criteria approximation requires $\Omega(nk)$ distance queries~\cite{BateniEFHJMW23,abs-2112-07050}. This complements the near-linear-time bi-criteria schemes known for doubling metrics~\cite{Cohen-AddadFS21}.

\paragraph{Concurrent Work.} In concurrent and independent work Mahadabi and Tarnawski \cite{mahabadi2026improvedsublinearalgorithmsmaximal} obtained a sub-linear algorithm for maximal independent set size with similar guarantees as \Cref{thm:main}. Similarly to our approach, their algorithm focuses on constructing a pre-fix of the run of the greedy MIS algorithm with respect to a random permutation. In contrast, their algorithm fixes the length of the pre-fix constructed instead of bounding the number of queries made during the pre-fix computation. In addition, they obtained an $O(\log k)$-approximate sub-linear algorithm for Steiner Forest with $\tilde{O}(k)$ running time, bypassing sub-linear MIS size estimation.

\subsection{Technical Overview}

\label{sec:tech-overview}

Our algorithm estimates the size of the independent set produced by the random-order greedy algorithm. It first computes a \emph{baseline} independent set $S_b \subseteq S$, where $S = \text{RGMIS}(G,\pi)$, by running the greedy algorithm with respect to a random permutation of the vertices $\pi$ until $\Theta(n^{1+1/3})$ adjacency-matrix queries have been made.

Since processing each vertex uses at most $|S|$ queries, if $|S| \leq n^{1/3}$, then this already computes the entire maximal independent set. Thus we may assume $|S|=\Omega(n^{1/3})$. Let $S_b(i)$ denote the $i$-th vertex added to $S_b$, and define

\begin{equation*}
\lambda = \max_{i <|S_b|} \left( \pi^{-1}(S_b(i+1)) - \pi^{-1}(S_b(i)) \right), \ \ \ \ \ \ U = V \setminus (S_b \cup N_G(S_b))
\end{equation*}

Here, $\lambda$ is the length of the longest chain of vertices not added to $S_b$, and $U$ is the set of vertices not yet dominated by $S_b$. Since the suffix of a uniformly random permutation remains uniformly random after conditioning on any prefix, one expects $|U|=\tilde{O}(n/\lambda)$ with high probability.

Moreover, since among every $\lambda$ consecutive vertices in $\pi$ at least one is added to $S_b$, and since constructing $S_b$ takes $\Omega(n^{1+1/3})$ queries with at most $|S_b|$ queries per processed vertex, we obtain
$|S_b| \geq n^{1+1/3}/(|S_b|\lambda)$, or equivalently $|S_b| \geq n^{2/3}/\sqrt{\lambda}$. To estimate $|S|$ within a $(1+\epsilon)$ factor by vertex sampling, it suffices to sample $\tilde{O}(n/|S|)$ vertices and decide their membership in $S$. A naive implementation costs $\tilde{O}(n^2/|S|)=\tilde{O}(n^{1+2/3})$ expected queries if $|S| \sim n^{1/3}$.

If $U$ were known explicitly, membership of a random vertex could be decided in expected $\tilde{O}(|U|)$ time: check whether it lies in $S_b$, outside $U$, or else run the MIS oracle restricted to $G[U]$. This would give total expected query complexity
$\tilde{O}(n|U|/|S|)$. Using $|U|=\tilde{O}(n/\lambda)$ and $|S|\geq |S_b|\geq n^{2/3}/\sqrt{\lambda}$, this is at most $\tilde{O}(n^{1+1/3}/\sqrt{\lambda})=\tilde{O}(n^{1+1/3})$.

The difficulty is that $U$ is not explicitly available. Verifying whether a vertex lies in $U$ may require $|S_b|$ queries, and a black-box invocation of the MIS oracle gives no improvement over previous bounds. In the hard regime $|S|\sim \sqrt n$ and $\lambda\sim n^{1/3}$, this leads back to the $\tilde{O}(n^{1+1/2})$ bound of Mahabadi et al.~\cite{MahabadiRTV25}.

Our main technical contribution is in showing that, using the partially computed set $S_b$, the membership of a vertex of $U$ in $S$ can nevertheless be decided in expected $\tilde{O}(n)$ queries; see \Cref{sec:query-complexity}. Consequently, the expected cost of deciding the status of a uniformly random vertex becomes $\tilde{O}(|S_b|+|U|)$: in $|S_b|$ time we test whether the vertex is in $S_b$, dominated by $S_b$, or potentially in $U$, and with probability $|U|/n$ we spend $\tilde{O}(n)$ additional queries. Therefore the total sampling cost is
$\tilde{O}(n(|S_b|+|U|)/|S|)=\tilde{O}(n+n|U|/|S|)=\tilde{O}(n^{1+1/3})$.

The intuition behind the fact that we may efficiently query MIS membership status even when restricted to $U$ is as follows. Suppose we run the adjacency-matrix implementation of the MIS oracle of ~\cite{YoshidaYMIH09} on $G[U]$. Since the algorithm does not know $U$, it must simulate queries in the original graph and filter out vertices outside $U$. The key insight is that, for a fixed vertex $u\in U$, most neighbors of $u$ were already banned early by vertices of $S_b$. Using the standard fact that after the greedy algorithm has selected $x$ vertices, the remaining unbanned graph has maximum degree $\tilde{O}(n/x)$, we show that for a random vertex $v\in V$, either $(u,v)\notin E$, or it suffices in expectation to inspect only the first $\tilde{O}(1)$ relevant vertices of $S_b$ to determine whether $v\in U$.

The MIS membership oracle of ~\cite{YoshidaYMIH09} runs in expected $O(|U|)$ time when restricted to $G[U]$. If it is simulated on the whole graph $G$, then each query to a vertex of $U$ is preceded, in expectation, by about $n/|U|$ queries to vertices outside $U$. The observation above shows that these rejected vertices can be filtered using only $\tilde{O}(1)$ expected work each. Thus the overhead of simulating the oracle on $G$ rather than on $G[U]$ is only $\tilde{O}(n)$ in expectation, giving the desired query bound.

\section{Preliminaries}
Given a graph $G = (V, E)$, an \emph{independent set} is a subset of vertices $S \subseteq V$ such that $(u,v) \notin E$ for every pair $u,v \in S$. An independent set $S$ is called \emph{maximal} if $S \cup \lbrace v\rbrace$ is not independent for any $v \in V \setminus S$. Throughout the paper we will use the term banned instead of dominated as common in the MIS literature. 

The greedy independent set algorithm first fixes some permutation of the vertices $\pi : [n] \rightarrow V$ and starts with an empty independent set $S =\emptyset$. The algorithm iterates through the vertices according to the ordering defined by $\pi$ and adds $v \in V$ to $S$ if $v$ neighbors no vertices in $S$. A run of the greedy independent set algorithm is said to be random if $\pi$ is chosen uniformly at random. For a permutation of vertices $\pi$ let $\text{RGMIS}(G,\pi)$ stand for the output of the greedy independent set algorithm when run on graph $G$ with respect to $\pi$.

Throughout this paper, we use standard notation where $|V| = n$ and $|E| = m$. Define the maximum degree of graph $G$ as $\Delta(G)$. For a subset $U \subseteq V$, let $G[U]$ denote the vertex-induced subgraph of $G$ on $U$. For $v\in V$ let $N_G(v)$ stand for the neighbors of $v$ in $G$. Analogously, for $U \subseteq V$ define $N_G(U) = \cup_{v \in U} N_G(v)$. We consider the \emph{adjacency matrix query} model, in which the graph is accessed via queries of the form $(u,v)$, where $u,v \in V$, returning whether $(u,v) \in E$. Sometimes we will refer to the \emph{adjacency list model} where the algorithm can use queries of the form $(v,i) | v \in V, i \in [\text{deg}_G(v)]$ returning the $i$-th neighbor of $v$.

Define an estimate $\tilde{\mu}$ to be a $(1+\epsilon)$-approximation of $\mu$ if $\tilde{\mu}\leq \mu \leq \tilde{\mu} (1+\epsilon)$. Much of the analysis of our algorithm relies on a well-known version of the Chernoff bound for binomial random variables $B(n,p)$, stated below.

\begin{lemma}[Chernoff Bounds] \label{lem:chernoff}
For any $\epsilon \in \left[0, 1\right]$,
$$\prob{B(n,p)-np  >\epsilon n p} \le e^{\frac{-\eps^2 np}{3}},$$
and 
$$\prob{np-B(n,p)  >\epsilon n p} \le e^{\frac{-\eps^2 np}{2}}.$$
\end{lemma}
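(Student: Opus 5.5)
We prove \Cref{lem:chernoff} by the standard exponential-moment (Chernoff–Cramér) method. Write $X = B(n,p) = \sum_{i=1}^n X_i$, where the $X_i$ are independent Bernoulli$(p)$ variables, and set $\mu = np$. The event $X - np > \epsilon np$ is contained in $X \ge (1+\epsilon)\mu$, so it suffices to bound the latter. For any $t>0$, Markov's inequality applied to $e^{tX}$ gives
$$\prob{X \ge (1+\epsilon)\mu} \le e^{-t(1+\epsilon)\mu}\,\exp{e^{tX}}.$$
By independence,
$$\exp{e^{tX}} = \prod_i \bigl(1+p(e^t-1)\bigr) \le \exp\bigl(\mu(e^t-1)\bigr),$$
using $1+x\le e^x$. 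For $\epsilon>0$ we choose $t=\ln(1+\epsilon)$, which yields the bound
$$\left(\frac{e^{\epsilon}}{(1+\epsilon)^{1+\epsilon}}\right)^{\mu} = \exp\bigl(-\mu\,[(1+\epsilon)\ln(1+\epsilon)-\epsilon]\bigr).$$
At $\epsilon=0$ the claimed bound equals $1$ and holds trivially.

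The lower tail is symmetric. For $\epsilon\in(0,1)$ and $t>0$, apply Markov to $e^{-tX}$:
$$\prob{X \le (1-\epsilon)\mu} \le e^{t(1-\epsilon)\mu}\exp\bigl(\mu(e^{-t}-1)\bigr).$$
Choosing $t=-\ln(1-\epsilon)$ gives
$$\exp\bigl(-\mu\,[\epsilon+(1-\epsilon)\ln(1-\epsilon)]\bigr).$$
The endpoint $\epsilon=1$ is handled separately: $\prob{X<0}=0$, so the bound holds trivially.

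It remains to verify two elementary inequalities on $[0,1]$, which I expect to be the only step needing care.

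\emph{Lower tail.} We need $g(\epsilon)=\epsilon+(1-\epsilon)\ln(1-\epsilon)\ge \epsilon^2/2$ on $[0,1)$. Expanding the Taylor series of $\ln(1-\epsilon)$ gives
$$g(\epsilon)=\sum_{k\ge 2}\frac{\epsilon^k}{k(k-1)}.$$
Every term is nonnegative and the $k=2$ term is $\epsilon^2/2$, so the inequality follows.

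\emph{Upper tail.} We need $f(\epsilon)=(1+\epsilon)\ln(1+\epsilon)-\epsilon-\epsilon^2/3\ge 0$ on $[0,1]$. Since $f(0)=0$, it suffices to show that $f'(\epsilon)=\ln(1+\epsilon)-2\epsilon/3$ is nonnegative there. We have $f'(0)=0$ and $f'(1)=\ln 2-2/3>0$. Moreover $f''(\epsilon)=1/(1+\epsilon)-2/3$ is decreasing, so $f'$ is concave. A concave function lies above the chord joining its endpoint values, both of which are nonnegative, so $f'\ge 0$ on $[0,1]$. Hence $f\ge 0$.

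Substituting these two inequalities into the exponents above yields $e^{-\epsilon^2 np/3}$ for the upper tail and $e^{-\epsilon^2 np/2}$ for the lower tail, as claimed.
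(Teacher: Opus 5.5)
Your proof is correct and complete. The exponential-moment bounds with $t=\ln(1+\epsilon)$ and $t=-\ln(1-\epsilon)$, the series identity $\epsilon+(1-\epsilon)\ln(1-\epsilon)=\sum_{k\ge 2}\epsilon^k/(k(k-1))$, the concavity argument for $\ln(1+\epsilon)\ge 2\epsilon/3$ on $[0,1]$, and your treatment of the endpoint cases all check out. The paper gives no proof of this lemma; it simply invokes it as a well-known form of the Chernoff bound, and your argument is the standard textbook derivation behind that citation.
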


In this paper we consider the problem of estimating the size of a maximal independent set computed by the greedy algorithm. In order to achieve this we will often refer to two different sublinear algorithms for deciding the membership of a vertex in the output independent set. The following algorithm is the list-query greedy MIS oracle.

\begin{algorithm}[H]
	
    \caption{$\mathcal{O}_L(G = (V,E),\pi,v)$ \label{alg:list-query}}
    \For{$u \in N_G(v) \cap \pi([\pi^{-1}(v)]-1)$ according to the ordering $\pi$}{
        \If{$\mathcal{O}_L(G,\pi,u) = 1$}{
            \Return{0}
        }
    }
    \Return{1}
\end{algorithm}

\begin{theorem}[\cite{YoshidaYMIH09}]
\label{thm:list-query}

$\mathcal{O}_L$ (\Cref{alg:list-query}) when queried with graph $G = (V,E)$ and a permutation $\pi$ of $V$ and vertex $v \in V$ drawn uniformly at random returns $1$ if and only if $v \in \text{RGMIS}(G,\pi)$ in expected $O(\Delta(G))$ time.

\end{theorem}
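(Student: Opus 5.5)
Correctness is immediate by induction on $\pi^{-1}(v)$. $\mathcal{O}_L(G,\pi,v)$ returns $1$ exactly when no earlier neighbor of $v$ is in the greedy MIS, which is exactly the greedy rule. Termination holds because recursive calls only go to vertices earlier in $\pi$. The content of the statement is therefore the running-time bound. We take the running time of a call to be the number of adjacency-list queries, i.e.\ edges $(v,u)$ explored. We would prove that, summed over all $v\in V$ and averaged over $\pi$, the total work is $O(m)$. Dividing by $n$ then gives expected $O(m/n)\le O(\Delta(G))$ for a uniform $v$.

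\textbf{Decomposition of the work.} The total number of edge explorations equals the sum, over directed edges $(w,u)$ with $\pi(u)<\pi(w)$ (i.e.\ $u$ earlier than $w$), of the expected number of times the recursion explores $(w,u)$. This counts over all root calls $v$ and all recursive invocations. Following Yoshida et al., we reduce a recursive invocation chain to a path $v=w_0,w_1,\dots,w_k$ with $\pi$ decreasing along the path. A neighbor $u$ is explored from $w$ only if every earlier neighbor of $w$ preceding $u$ returned $0$, since otherwise the loop stops. So the number of explorations of $(w,u)$ reachable from a root $v$ is the number of decreasing paths from $v$ ending with the step $w\to u$ that are \emph{valid}. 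Valid means that for every consecutive pair $w_{i}\to w_{i+1}$, no neighbor of $w_i$ earlier than $w_{i+1}$ is in the MIS.

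\textbf{Key step.} Fix an edge $e$ and a path length $k$. We would bound the probability over $\pi$ that a given path of length $k$ through $e$ is both decreasing and valid. Decreasing alone gives $1/(k+1)!$, which is not enough, since the number of paths can be $\Delta^k$. The known argument (the Yoshida–Yamamoto–Ito edge-based analysis) instead considers the random permutation restricted to edges. It shows that for each edge $e$ the expected number of valid paths of length $k$ ending at $e$, summed over starting points, is at most $1/k!$-type in a way that telescopes. The cleanest route is their lemma that $\sum_{v}\mathbb{E}[T(v)]\le \sum_{e}\sum_{k}\frac{1}{k!}\cdot(\text{something }O(1))$, giving $O(m)$.

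\textbf{Plan for that lemma.} I would try to prove it via a charging/injection argument. Map each valid path together with a permutation to another permutation in which the path's vertices are reordered. Validity forces that many reorderings of the path's vertices would make the path invalid or non-decreasing. Hence a given path is valid only on an $O(1/k!)$-fraction of the orderings of its vertex set, independent of how many paths there are; summing over $k$ gives a constant per edge. This injection is the main obstacle: the events for different paths overlap, and one must ensure the map is close to injective. I would aim to handle that by ordering edges lexicographically by $(\pi$ of lower endpoint, $\pi$ of upper endpoint$)$ as Yoshida et al.\ do. Since the statement is cited from \cite{YoshidaYMIH09}, one may alternatively simply invoke their Theorem directly, with $m/n\le\Delta(G)$.
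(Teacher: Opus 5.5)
Your correctness argument is fine. The fallback in your last sentence is exactly what the paper does: it gives no proof of its own and simply imports the result of Yoshida et al.~\cite{YoshidaYMIH09}, namely that the expected number of recursive calls for a uniformly random start vertex is at most $1+m/n$, combined with $m/n\le\Delta(G)/2$. One small correction: what that result bounds is the number of recursive calls, i.e.\ explored edges plus one. That is the quantity the paper later uses for real list-queries. It is not the number of adjacency-list queries, because enumerating $N_G(v)\cap\pi([\pi^{-1}(v)]-1)$ in $\pi$-order requires reading all of $v$'s list at every call. So you should define the cost as explored edges rather than equate the two.

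The self-contained argument you present as your main route has a genuine gap at its central step. You correctly note that the per-path probability $1/(k+1)!$ is useless, because up to $\Delta^{k}$ candidate paths can end at a fixed edge. But your plan for the key lemma then falls back on a per-path statement: a given path is valid on only an $O(1/k!)$ fraction of the orderings of its vertex set, ``independent of how many paths there are.'' Any per-path bound must still be summed over all candidate paths ending at $e$, so it cannot yield a per-edge constant. Here it is even weaker than the trivial $1/(k+1)!$. What is needed is a bound on the expected \emph{total} number of valid paths ending at a fixed edge, summed over all start vertices and all lengths. That requires showing that the sets of permutations you attach to distinct pairs $(\pi,P)$ are disjoint, or overlap only boundedly; in other words, a genuinely injective (or $O(1)$-to-one) map from pairs $(\pi,P)$ to permutations.

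You name this as ``the main obstacle'' but do not resolve it, and it is the entire difficulty. Re-ranking the vertices of $P$ changes the greedy MIS, and hence which other paths are valid, so injectivity cannot be checked path by path. Ordering edges lexicographically by their endpoints' ranks does not address these collisions at all. The intermediate lemma you attribute to \cite{YoshidaYMIH09} is stated only as ``$1/k!$-type'' with ``something $O(1)$,'' so it cannot carry the argument either. As written, the running-time bound rests entirely on the citation.
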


The second algorithm naturally implements the same subroutine using adjacency matrix queries.

\begin{algorithm}[H]
	
    \caption{$\mathcal{O}_M(G = (V,E),\pi,v)$ \label{alg:matrix-query}}
    \For{$i \in [\pi^{-1}(v)-1]$}{
        \If{$(v,\pi(i)) \in E$}{
            \If{$\mathcal{O}_M(G,\pi,\pi(i)) = 1$}{
                \Return{0}
            }
        }
    }
    \Return{1}
\end{algorithm}

\begin{theorem}[\cite{MahabadiRTV25}]
\label{thm:matrix-query}

$\mathcal{O}_M$ (\Cref{alg:matrix-query}) when queried with graph $G = (V,E)$ and a permutation $\pi$ of $V$ and vertex $v \in V$ drawn uniformly at random returns $1$ if and only if $v \in \text{RGMIS}(G,\pi)$ in expected $O(n)$ time.

\end{theorem}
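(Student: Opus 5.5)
The plan is to derive \Cref{thm:matrix-query} from \Cref{thm:list-query}. The idea is to charge each adjacency-matrix query made by $\mathcal{O}_M$ to a recursive call, and to use the known bound on the expected number of recursive calls of the list-query oracle. Correctness is immediate in both cases. By induction on $\pi^{-1}(v)$, the call $\mathcal{O}_M(G,\pi,v)$ returns $1$ exactly when no earlier neighbor of $v$ is in $\text{RGMIS}(G,\pi)$, which is precisely the greedy rule. Moreover, $\mathcal{O}_M$ and $\mathcal{O}_L$ generate the same recursion tree: both scan the earlier neighbors of $v$ in $\pi$-order and recurse on each one until a recursive call returns $1$.

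The cost of $\mathcal{O}_M$ is the sum, over all recursive calls $\mathcal{O}_M(G,\pi,w)$, of the number of indices $i$ scanned inside that call. A call on $w$ either scans all of $[\pi^{-1}(w)-1]$, which is at most $n$ queries, or it stops at the first earlier neighbor $u$ of $w$ that lies in the MIS. In either case the call costs at most $n$ queries plus the cost of its subcalls. Hence the total cost is at most $n$ times the number of distinct calls in the recursion tree. Here I would use memoization, so that each vertex is evaluated at most once; if the paper's analysis does not memoize, I would bound the number of calls counted with multiplicity instead.

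A bound of $n$ times the number of calls is too crude, because \Cref{thm:list-query} only bounds the expected number of calls by $O(\Delta)$, which would give $O(n\Delta)$. I would therefore refine the charging as follows. The queries made inside the call on $w$ split into queries that hit edges, each of which spawns a subcall, and queries to non-neighbors. I charge each non-neighbor query $(w,\pi(i))$ to the position $i$. The key claim is that, summed over the whole recursion, the expected number of scanned positions is $O(n)$. To prove it, I would follow the Yoshida et al.\ analysis: a call on $w$ is reached only through a decreasing chain $v=w_0,w_1,\dots,w_k=w$ in $\pi$, and such a chain has probability about $1/k!$ of being explored. The scan inside the call on $w$ covers positions below $\pi^{-1}(w)$ and stops early.

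I would organize this as a bound on the expected number of pairs $(\text{call on } w, \text{position } j)$ with $j<\pi^{-1}(w)$ such that $j$ is scanned. For fixed $v$ and fixed position $j$, the scans that reach $j$ are nested along decreasing paths. Averaging over a uniform $v$ and a random $\pi$, the quantity $\sum_k n^{k}\cdot(\deg\text{-weighted path count})/k!$ telescopes in the same way as in the $O(m/n)$ proof. This gives $O(m/n)$ expected calls, each scanning expected $O(n^2/m)$ positions, for a total of $O(n)$. The main obstacle is exactly this decoupling: the length of a scan and the number of calls are correlated, so the argument cannot simply multiply expectations. My first attempt would be to reprove the Yoshida et al.\ edge-charging lemma with the weight of each explored edge $(w,u)$ replaced by the gap $\pi^{-1}(w)-\pi^{-1}(u)$, and to show that the expected gap is $O(n/\deg)$ by symmetry of the random permutation.
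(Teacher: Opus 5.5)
This statement is imported from \cite{MahabadiRTV25}, and the paper gives no proof of it, so your argument has to stand on its own. It has a genuine gap at exactly the point that carries the content.

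Several parts are fine. Correctness is fine, and so is the observation that $\mathcal{O}_M$ and $\mathcal{O}_L$ share a recursion tree, so the queries that hit edges are covered by the Yoshida et al.\ call count. What remains is the whole theorem. A call on $w$ makes exactly $t(w)$ queries, where $t(w)$ is the position of the first MIS neighbour of $w$ (or $\pi^{-1}(w)-1$ if $w$ is in the MIS). You must show that the sum of $t(w)$ over all call instances is $O(n)$ in expectation. These are uncached instances: memoization would change the algorithm, and the statement and its use in \Cref{sec:query-complexity} concern the uncached oracle.

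Your ``key claim'' is a restatement of this, and it is not proved. The supporting heuristic ``$O(m/n)$ calls, each scanning $O(n^2/m)$ positions'' fails for more than the correlation reason you mention, because neither factor is right. Let $G$ be a clique on $n/2$ vertices plus $n/2$ isolated vertices. Then $m/n=\Theta(n)$, yet a clique start makes at most two calls (itself and the first clique vertex). Meanwhile a call on an isolated vertex scans $\Theta(n)$ positions, not $O(n^2/m)=O(1)$.

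The proposed repairs do not close the gap.
\begin{itemize}
\item Summing the $1/(k+1)!$ probability that a fixed path is decreasing over all paths gives only bounds exponential in the degree, as in \cite{NguyenO08}. The $O(1+m/n)$ bound of Yoshida et al.\ does not come from such a sum; it comes from a charging argument over permutations that uses the exploration condition ($w_i$ reaches $w_{i+1}$ only if no MIS neighbour of $w_i$ precedes $w_{i+1}$). So nothing ``telescopes in the same way''.
\item Weighting an explored edge $(w,u)$ by $\pi^{-1}(w)-\pi^{-1}(u)$ measures the wrong quantity, because $\mathcal{O}_M$ scans upward from position $1$. The queries spent before reaching $u$ equal the distance from the previously explored neighbour of $w$, not the distance to $w$.
\item This weighting also leaves costs uncharged. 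The final segment of a call on an MIS vertex is charged to no edge. Calls that explore no edge at all, such as an isolated start costing $\Theta(n)$, are missed entirely.
\item The claim that the expected gap is $O(n/\deg)$ ``by symmetry'' fails once you condition on the path being explored. That is precisely the dependence you yourself flagged.
\end{itemize}

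What is actually needed is a charging statement at the level of scanned pairs. For instance, you could show that for every ordered pair $(w,x)$, the expected number of call instances on $w$, summed over all start vertices, that scan $x$ is $O(1)$. Summing over the $n^2$ pairs would then give $O(n^2)$ in total over all starts, i.e.\ $O(n)$ for a uniform start. A scanned non-neighbour satisfies the same reach condition as an explored edge except for adjacency, so this is the natural extension of the Yoshida et al.\ edge-charging lemma. Proving such a statement is the real work, and your proposal does not do it.
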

\section{Estimating $|\text{RGMIS}(G,\pi)|$}

When initiated on graph $G = (V,E)$ the algorithm first draws $\pi: [n] \rightarrow V$ a uniform random permutation of $V$. Afterwards, the algorithm runs the greedy MIS algorithm with respect to $\pi$ for $\Theta(n^{1+1/3})$ steps in which a vertex is added to the MIS if it doesn't neighbor prior MIS vertices. The independent set found this way serves as the \emph{baseline} independent set $S_b$.

If the algorithm has iterated through the whole node set during this process it may return the size of the MIS exactly. Otherwise, it samples $\tilde{O}(n/|S_b|)$ vertices at random. For each sampled vertex $v$ it checks if it's present in $\text{RGMIS}(G,\pi)$. If $v \in S_b \cup N_G(S_b)$ this can be done in $O(|S_b|)$ queries greedily. Otherwise, the algorithm calls $\mathcal{O}_{S_b}(G,\pi,v)$ (\Cref{alg:U-oracle}).

Restricted to $U = V \setminus (S_b \cup N_G(S_b))$ the oracle $\mathcal{O}_{S_b}$ is identical to the adjacency matrix MIS oracle of \Cref{alg:matrix-query}. While simulating the MIS oracle on $U$ the algorithm has to detect if relevant vertices fall in $V \setminus U$. This is done through greedily checking edges to $S_b$ in an order respecting $\pi$.

\begin{algorithm}[H]

    $\pi: [n] \rightarrow V$ random permutation of $V$

    $S_b \leftarrow \emptyset$
    
    $i \leftarrow 1$

    \While{total adjacency queries made $\le 2n^{1+1/3}$ } {
        \If{$i > n$}{
            \Return{$|S_b|$}
        }
        \ElseIf{$\pi(i) \notin N_G(S_b)$}
        {
            $S_b \leftarrow S_b \cup \left\{\pi\left(i\right)\right\}$ 
        }
        $i$ = $i + 1$
    }

    $i_0 \gets$ the last vertex considered in the baseline loop
    
    $r \leftarrow \frac{100n\log n}{\left|S_b\right|\varepsilon^2}$
    
    \For{$i \ = 1, \dots, r$}
    {
        $v \leftarrow$ random vertex of $V$ 
        
        \If{$v \in S_b$}
        {
            $X_i \leftarrow 1$\
        }
        \ElseIf{$v \in N_G(S_b)$}
        {
            $X_i \leftarrow 0$\
        }
        \Else
        {
            $X_i \leftarrow \mathcal{O}_{S_b}\left(\pi,v\right)$\
        }
    }
    \Return{$n \cdot \frac{\sum_{i = 1}^{r} X_i}{r(1+\epsilon/2)} $}\
    \caption{Estimate-MIS$(G,\epsilon)$}
	\label{alg:two-phase}
\end{algorithm}

\begin{algorithm}[H]
    \For{$i = i_0+1, \dots ,\pi^{-1} (v)$}
        {
            \If{$\left(v, \ \pi\left(i\right)\right) \in E$}
            {
                $j \leftarrow 1$

                \While{$(\pi(i),S_b(j)) \notin E$}{
                    $j \leftarrow j+1$
                    \If{$j > |S_b|$}{
                        \If{$\mathcal{O}_{S_b}\left(G,\pi,\pi(i)\right) = 1$}
                            {
                                \Return{0}
                            }
                    }
                }
            }
        }
    \Return{1}
    \caption{$\mathcal{O}_{S_b}\left(\pi,v\right)$}
    \label{alg:U-oracle}
\end{algorithm}

\begin{lemma} \label{lem:correct approximation}
    With high probability, \Cref{alg:two-phase} returns a $(1+\epsilon)$-approximation to $|\text{RGMIS}(G,\pi)|$.
\end{lemma}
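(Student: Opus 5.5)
The plan is to split the argument into two parts. First I will show that each sample $X_i$ is exactly the indicator $\mathbbm{1}[v \in \text{RGMIS}(G,\pi)]$. Then I will apply the Chernoff bound of \Cref{lem:chernoff} to the sum $\sum X_i$. The case where the baseline loop exhausts all of $V$ is trivial, since $S_b=\text{RGMIS}(G,\pi)$ is then returned exactly. So assume the loop stops at $i_0<n$. Write $S=\text{RGMIS}(G,\pi)$.

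\emph{Correctness of the indicators.} By construction, $S_b$ is exactly $S\cap \pi([i_0])$, because the baseline loop is the greedy algorithm run on the prefix of $\pi$. Hence $v\in S_b$ gives $X_i=1$ correctly, and $v\in N_G(S_b)$ gives $v\notin S$ correctly. For $v\in U=V\setminus(S_b\cup N_G(S_b))$, note that $U\subseteq \pi(\{i_0+1,\dots,n\})$. The greedy run on the suffix then coincides with the greedy run on $G[U]$ with the induced order: vertices in the suffix outside $U$ are banned by $S_b$ and never join $S$. Moreover, a vertex of $U$ is never banned by an earlier vertex outside $U$, since vertices outside $U$ are either in $S_b$ (impossible, as $U$ is not adjacent to $S_b$) or not in $S$.

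So $S\cap U=\text{RGMIS}(G[U],\pi|_U)$. I will then check by induction on $\pi^{-1}(v)$ that $\mathcal{O}_{S_b}(\pi,v)$ computes membership in this set. The oracle scans the earlier neighbors $\pi(i)$ of $v$ with $i>i_0$. The inner while loop certifies $\pi(i)\in U$ exactly when no $S_b(j)$ is adjacent to $\pi(i)$, and it recurses only in that case. This is precisely \Cref{alg:matrix-query} run on $G[U]$, so it returns $1$ iff $v\in S$. (A small care point: the pseudocode's recursion sits inside the while loop once $j>|S_b|$, and I read it as a break after the recursive call.)

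\emph{Concentration.} With $p=|S|/n$, the $X_i$ are i.i.d.\ Bernoulli$(p)$. I will use $p\ge |S_b|/n$, so $rp\ge 100\log n/\epsilon^2$. Applying \Cref{lem:chernoff} with deviation parameter $\epsilon/3$ gives
\[
(1-\epsilon/3)\,rp\le \sum X_i\le (1+\epsilon/3)\,rp
\]
with probability $1-n^{-\Omega(1)}$. Dividing by $r(1+\epsilon/2)/n$, the output $\tilde\mu$ satisfies
\[
\tilde\mu\le \frac{1+\epsilon/3}{1+\epsilon/2}|S|\le |S|
\quad\text{and}\quad
|S|\le \tilde\mu\,\frac{1+\epsilon/2}{1-\epsilon/3}\le (1+\epsilon)\tilde\mu
\]
for $\epsilon\le 1$. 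Both inequalities are elementary; the second one may need slightly tighter constants, e.g.\ deviation $\epsilon/4$.

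The main obstacle is the first part: arguing carefully that restricting the greedy process to $U$ and simulating it through filtered matrix queries reproduces $S\cap U$. In particular, I need that vertices of the suffix outside $U$ can be safely skipped. The probabilistic part is routine.
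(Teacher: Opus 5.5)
Your proposal is correct and follows the paper's route: conditioned on $\pi$, the $X_i$ are i.i.d.\ Bernoulli$(|S|/n)$, $S_b\subseteq S$ gives $rp\ge 100\log n/\epsilon^2$, and Chernoff finishes. The paper simply asserts that $\bar X$ is binomial, so your argument that $\mathcal{O}_{S_b}$ simulates \Cref{alg:matrix-query} on $G[U]$ — via $S\cap U=\text{RGMIS}(G[U],\pi|_U)$, with the missing break after the recursive call — fills in a step the paper leaves implicit.

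The only difference is in the tail algebra. The paper uses the exact thresholds $\epsilon/2$ and $\frac{\epsilon/2}{1+\epsilon}$. Your symmetric deviation $\epsilon/3$ works only for $\epsilon\le 1/2$, so you do need $\epsilon/4$, which covers all $\epsilon\le 1$.
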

\begin{proof}
    First note that as $S_b$ is an independent set obtained by partially running the greedy MIS algorithm with respect to $\pi$ it must be that $S_b \subseteq S = \text{RGMIS}(G,\pi)$. Let $\bar{X} = \sum_{i=1}^r X_i$. Observe that $\bar{X} = B(r,|\text{RGMIS}(G,\pi)|/n)$. The proof concludes from a standard application of Chernoff's bound (\Cref{lem:chernoff}).

    \begin{align*}
    \mathbb{P}\left[\frac{n\bar{X}}{r(1+\epsilon/2)} \notin [\frac{|S|}{1+\epsilon},|S|]\right] & \leq \mathbb{P} \left[\frac{n\bar{X}}{r} -|S|> \frac{\epsilon |S|}{2}\right] + \mathbb{P} \left[|S| - \frac{n\bar{X}}{2} > \frac{\epsilon/2 |S|}{1 + \epsilon}\right]\\
    & \leq \mathbb{P}\left[B(r,\frac{|S|}{n}) - \frac{r |S|}{n} > \frac{\epsilon}{2}\frac{rn}{|S|}\right] + \mathbb{P}\left[\frac{r |S|}{n} - B(r,\frac{|S|}{n}) > \frac{\epsilon/2}{1 + \epsilon}\frac{rn}{|S|}\right]\\
    & \leq e^{-\frac{100|S| \log n}{12|S_b|}}  \leq 2e^{-5\log n} + e^{-\frac{100|S| \log n}{8|S_b|(1 + \epsilon)^2}}  \leq 2e^{-3\log n} = \frac{1}{\poly(n)}
    \end{align*}
\end{proof}

The following lemma upper bounds the query complexity of \Cref{alg:U-oracle} when queried with a uniform random vertex of $U = V \setminus (S_b \cup N_G(S_b))$. Its proof is deferred to \Cref{sec:query-complexity}.

\begin{lemma} \label{lem:linear-oracle}
    Let $T_{\mathcal{O}_{S_b}} (G,\pi,\ v)$ be the number of adjacency queries $\mathcal{O}_{S_b}\left(G,\pi, \ v\right)$ makes. Then, if $v$ is a uniform random vertex of $V \setminus (S_b \cup N_G(S_b))$,
    $$\exp{\pi,v}{T_{\mathcal{O}_{S_b}} (G,\pi,v)} = O(n\log ^2 n).$$
\end{lemma}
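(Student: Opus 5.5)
The plan is to split the queries of $\mathcal{O}_{S_b}$ into two kinds. The first kind are \emph{scan queries} $(v,\pi(i))$ for $i>i_0$. The second kind are \emph{filter queries} $(\pi(i),S_b(j))$, which decide whether a neighbor $\pi(i)$ of the current vertex lies in $U$. Then I would charge each kind to a quantity already controlled by \Cref{thm:matrix-query}.

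\textbf{Step 1 (conditioning).} Condition on the baseline run, i.e.\ on $\pi(1),\dots,\pi(i_0)$ and hence on $S_b$ and $U$. The relative order of the suffix $\pi(i_0+1),\dots,\pi(n)$ is still uniform. Vertices of $U$ all lie in this suffix, so the induced order $\pi_U$ on $U$ is a uniform random permutation of $U$. Also $\text{RGMIS}(G,\pi)\cap U=\text{RGMIS}(G[U],\pi_U)$. This holds because every suffix vertex outside $U$ is banned by $S_b$ and never enters the MIS.

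Hence the recursion tree of $\mathcal{O}_{S_b}$ started at a uniform $v\in U$ is exactly the recursion tree of $\mathcal{O}_M(G[U],\pi_U,v)$. Its recursive calls are on $U$-vertices only, since the inner while loop recurses only when $\pi(i)$ has no neighbor in $S_b$. \Cref{thm:matrix-query} applied to $G[U]$ then bounds the expected number of $U$-scan queries, summed over the tree, by $O(|U|)$.

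\textbf{Step 2 (scan overhead).} A call at $w$ scans every suffix position before $\pi^{-1}(w)$, not only the $U$-positions. Because the positions of the $U$-vertices within the suffix are random, each $U$-vertex examined is preceded in expectation by $O((n-i_0)/|U|)$ non-$U$ suffix vertices. The total scan cost should therefore be $O(|U|\cdot n/|U|)=O(n)$ in expectation.

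To make this rigorous, I would note that the interleaving of $U$ and non-$U$ positions is independent of $\pi_U$. The process inspects a prefix of the $U$-order, and then the expected number of suffix positions up to the $k$-th $U$-vertex is $k(n-i_0+1)/(|U|+1)$. A $\log n$ slack accounts for the fact that the number of $U$-vertices examined is a random stopping quantity.

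\textbf{Step 3 (filter cost; main obstacle).} For each neighbor $x=\pi(i)\notin U$ of the current vertex $w$, the inner loop costs $j^*(x)$, the index of the first $S_b$-vertex adjacent to $x$. I must show that over the random placement of $x$, $\mathbb{E}[j^*(x)\cdot \mathbbm{1}[(w,x)\in E]]$ is $O(\log^2 n)$ times the probability that $x$ is scanned, i.e.\ $\tilde O(1)$ per scanned vertex. I would use the degree-decay fact: after the greedy has selected $S_b(1),\dots,S_b(j)$, the unbanned graph has maximum degree $O(n\log n/j)$ w.h.p. Since $w\in U$ is unbanned at every stage, $w$ has at most $O(n\log n/j)$ neighbors $x$ with $j^*(x)>j$.

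Then
\[
\sum_{x\in N(w)\setminus U} j^*(x)\le \sum_{j\ge 1}\bigl|\{x\in N(w): j^*(x)\ge j\}\bigr| \le \sum_{j\le |S_b|} O\!\left(\frac{n\log n}{j}\right)=O(n\log^2 n).
\]
This bounds the filter cost of a single call by $O(n\log^2 n)$, but I need this summed over the whole recursion tree. So the bound must be averaged: the scanned vertex $x$ is a random suffix position, and I would show that its expected filter cost per scan is $O(\log^2 n)$. Adding the $U$-neighbors, each costing $|S_b|$ filter queries and charged to the $O(|U|)$ $U$-edges explored, and noting that $|S_b|\cdot|U|=\tilde O(n)$ by the baseline calculation, gives the $O(n\log^2 n)$ total.

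Justifying the per-scan averaging is the delicate part, since the tree's shape correlates with which vertices are scanned. I would handle it with the independence of the interleaving from Step 2, together with a union bound over the w.h.p.\ degree-decay events.
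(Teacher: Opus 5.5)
\textbf{Where your proposal matches the paper.} Most of your plan tracks the paper's proof:
\begin{itemize}
\item conditioning on the prefix up to $i_0$ and identifying the recursion with $\mathcal{O}_M(G[U],\pi_U,\cdot)$;
\item counting scanned suffix positions via the independence of the $U$/non-$U$ interleaving from $\pi_U$ (this gives the exact negative-hypergeometric expectation, so no $\log n$ slack is needed);
\item your layer-cake bound $\sum_{x\in N(w)\setminus U} j^*(x)=O(n\log^2 n)$, which is the paper's dyadic argument in \Cref{cl:U-query:bound}.
\end{itemize}

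\textbf{The gap: neighbors of the current vertex that lie in $U$.} You charge $|S_b|$ filter queries to each explored edge of $G[U]$. You bound the number of such edges by the $O(|U|)$ scan queries of Step 1. You then invoke $|S_b|\cdot|U|=\tilde O(n)$ ``by the baseline calculation''. No such bound holds. The baseline gives only a lower bound $|S_b|\ge n^{2/3}/\sqrt{\lambda+1}$ and an upper bound $|U|=\tilde O(n/\lambda)$, which say nothing about the product. Two examples:
\begin{itemize}
\item On the empty graph the baseline loop stops with $|S_b|\approx 2n^{2/3}$ and $|U|\approx n$.
\item On a disjoint union of $n^{2/3}$ cliques of size $n^{1/3}$, w.h.p.\ $|S_b|=\Theta(n^{2/3})$ and $|U|=\Theta(n)$, with $G[U]$ nonempty.
\end{itemize}
So this part of your argument only yields $O(n^{5/3})$.

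\textbf{What is missing.} You need that the number of explored $G[U]$-edges is far smaller than $|U|$. Restricted to $G[U]$, your oracle explores exactly the edges explored by $\mathcal{O}_L(G[U],\pi_U,v)$, with $\pi_U$ uniform and $v$ uniform in $U$. By \Cref{thm:list-query} there are $O(\Delta(G[U]))$ of them in expectation. \Cref{lem:U-max-degree} at $t=|S_b|$ gives $\Delta(G[U])=O(n\log n/|S_b|)$. Hence the $U$-neighbor cost is $O((|S_b|+1)\,n\log n/|S_b|)=O(n\log n)$; this is exactly how \Cref{cl:adjacency-list:total} handles it.

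\textbf{Two smaller corrections.}
\begin{itemize}
\item The per-scan filter bound you can prove for a non-$U$ position is $O(n\log^2 n/|V_{i_0}\setminus U|)$. This is $\tilde O(1)$ only when $|V_{i_0}\setminus U|=\tilde\Omega(n)$. It must therefore be paired with the expected $O(|V_{i_0}\setminus U|)$ non-$U$ scans (the content of \Cref{cl:spurious:Yexp}), not with $O(n)$ scans.
\item A union bound on the degree-decay failure controls a probability, not an expectation. You also need the expected cost on $\neg\Delta_{S_b}$ to be polynomially bounded. The paper gets this by noting that $\pi_U$ remains uniform given the prefix, so \Cref{thm:matrix-query} still bounds the number of $U$-queries in expectation on that event.
\end{itemize}
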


During the construction of $S_b$ in a run of \Cref{alg:two-phase} let $\lambda$ stand for the length of the longest series of consecutive vertices considered and not added to $S_b$. 

\begin{observation}
\label{obs:lambdasb}

$$|S_b| \geq \frac{n^{2/3}}{\sqrt{\lambda+1}}$$

\end{observation}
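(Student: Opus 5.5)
We prove the bound by double counting the queries spent while building $S_b$. We restrict to the case where the baseline loop stops because the query budget $2n^{1+1/3}$ is exhausted, i.e.\ at least $\Theta(n^{1+1/3})$ queries are made. If instead the loop reaches $i>n$, the algorithm returns $|S_b|$ exactly and never uses the observation.

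The first step bounds the cost of processing one vertex. Testing whether $\pi(i)\in N_G(S_b)$ compares $\pi(i)$ against the current members of $S_b$ one by one, so it costs at most the current $|S_b|$ queries, which is at most the final $|S_b|$. Hence
\[
  n^{1+1/3} \le (\#\text{vertices considered}) \cdot |S_b| .
\]

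The second step bounds the number of vertices considered, $i_0$, by a function of $|S_b|$ and $\lambda$. By definition of $\lambda$, every run of consecutive considered vertices that are not added to $S_b$ has length at most $\lambda$. Such runs occur before the first vertex of $S_b$, between consecutive vertices of $S_b$, and after the last one. In fact $\pi(1)$ is always added, so the first of these runs is empty. The considered sequence therefore splits into at most $|S_b|+1$ runs of length at most $\lambda$, together with the $|S_b|$ added vertices. This gives
\[
  i_0 \le |S_b| + (|S_b|+1)\lambda \le 2|S_b|(\lambda+1).
\]
If $\lambda$ is defined only over gaps between consecutive members of $S_b$, the trailing run after the last member also has to be absorbed. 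This is the step that needs care. I would handle it by taking $\lambda$ to cover the final run as well, which matches the phrase ``longest series of consecutive vertices considered and not added''.

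Combining the two steps gives
\[
  n^{1+1/3} \le 2|S_b|^2(\lambda+1),
\]
that is,
\[
  |S_b| \ge \frac{n^{2/3}}{\sqrt{2(\lambda+1)}} .
\]
The stated bound drops the constant $\sqrt 2$. This is recovered by using that the budget is $2n^{1+1/3}$ rather than $n^{1+1/3}$: the loop runs until it has made more than $2n^{1+1/3}$ queries. The last vertex processed adds at most $|S_b|$ queries, which is negligible. The constant is in any case immaterial for the later $\tilde O$ analysis.
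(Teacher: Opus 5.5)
Your proof is correct and is essentially the paper's argument: restricted, as it must be, to the case where the budget $2n^{1+1/3}$ is exhausted, each considered vertex costs at most $|S_b|$ queries, and every $\lambda+1$ consecutive considered vertices contain an added one, with $\lambda$ covering the trailing run as in the paper's definition. Since you already observed that the leading run is empty, you could count only $|S_b|$ runs and get $i_0 \le |S_b|(\lambda+1)$ directly, which yields the paper's slightly stronger $|S_b|^2 \ge 2n^{1+1/3}/(\lambda+1)$ without spending the factor $2$ of the budget.
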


\begin{proof}

Whenever \Cref{alg:two-phase} considers adding a vertex $v$ to $S_b$ it only completes adjacency matrix queries to check whether $v \in N_G(S_b)$. Therefore, every $(\lambda+1)|S_b|$ adjacency queries it made it must have extended $S_b$ by one vertex. This implies $|S_b| \geq 2n^{1+1/3}/((\lambda+1) |S_b|)$. 

\end{proof}

\begin{claim}
\label{cl:lambda-bound}

With probability $(1-1/n^{10})$ (with respect to the randomness of $\pi$),

$$|U| \leq \frac{100n\log n }{\lambda}$$

\end{claim}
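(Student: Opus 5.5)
The plan is to use the standard ``suffix is uniformly random'' argument, applied to the single longest gap between consecutive $S_b$ vertices. Fix the run of the baseline loop. Let $t$ be the position in $\pi$ where the longest run of $\lambda$ consecutive non-added vertices begins, right after some vertex of $S_b$ was added. Let $S_t$ be the partial independent set at that moment and $U_t = V \setminus (S_t \cup N_G(S_t))$ the set of vertices not yet banned. Since $S_t \subseteq S_b$, we have $U \subseteq U_t$. Moreover, every vertex at positions $t+1,\dots,t+\lambda$ was rejected, so none of them lies in $U_t$: a vertex of $U_t$ appearing there would not be banned by $S_t$, and, as no vertices are added during the run, it would have been added. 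The statement therefore reduces to showing that it is unlikely that $|U_t| > 100 n\log n/\lambda$ while the next $\lambda$ vertices of $\pi$ all avoid $U_t$.

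The core step is a union bound over all prefix lengths $t \in [n]$ and all candidate run lengths $\ell \in [n]$; the run with length exactly $\lambda$ is one such pair. Condition on the first $t$ entries of $\pi$. This conditioning fixes $S_t$ and $U_t$, and the remaining entries of $\pi$ form a uniformly random permutation of the $n-t$ unprocessed vertices, which contain all of $U_t$. Suppose $|U_t| = u \ge 100 n \log n/\ell$. The probability that the next $\ell$ entries all avoid $U_t$ is at most
$$\left(1-\frac{u}{n-t}\right)^{\ell} \le \left(1 - \frac{u}{n}\right)^{\ell} \le e^{-u\ell/n} \le n^{-100}.$$
Here the sampling is without replacement, and each successive draw only makes avoiding $U_t$ less likely, so the bound is valid as written. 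A union bound over the at most $n^2$ pairs $(t,\ell)$ bounds the probability of any violation by $n^{-98} \le n^{-10}$.

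On the complementary event, the specific pair $(t,\lambda)$ gives $|U| \le |U_t| < 100 n\log n/\lambda$. The main subtlety is that $\lambda$ is itself random and depends on $\pi$, and the stopping time of the loop depends on the query count. The union bound over all $(t,\ell)$ handles this: the event we bound depends only on $\pi$, not on when the loop stops, and the run realizing $\lambda$ is one of the enumerated pairs.

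Two edge cases need separate handling. If the longest run starts at the very beginning, before any vertex is added, set $S_t = \emptyset$ and $U_t = V$; the same calculation applies. If the run is cut off by the end of the loop, it consists of $\lambda$ consecutive processed vertices after the last addition, and the argument is unchanged. Finally, if $\lambda = 0$ the bound is vacuous.
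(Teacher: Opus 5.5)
Your proof is correct and takes the same route as the paper's. The paper also fixes the starting index $\hat{i}$ of the longest rejected run, uses that the suffix of $\pi$ is uniform given the prefix, bounds the probability that $\lambda$ consecutive draws without replacement avoid at least $|U|$ unbanned vertices by $e^{-\lambda|U|/n}$, and union bounds over the $O(n)$ choices of $\hat{i}$. Your extra union bound over the run length $\ell$, together with stating the bad event via $U_t \supseteq U$ (a function of the prefix alone), makes explicit how the randomness of $\lambda$ and $|U|$ is handled, which the paper's argument leaves implicit.
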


\begin{proof}

Let $\pi_{\leq i}$ stand for the permutation $\pi$ restricted to the first $i$ positions. For the sake of contradiction, assume that $|U| > 100 n \log n/\lambda$. Let $\pi(i_0)$ be the last vertex the algorithm considered to add to $S_b$. There must exist some index $\hat{i} < i_0$ such that $\pi(\hat{i}+1),\dots, \pi(\hat{i} + \lambda)$ were not added to $S_b$. 

Once $\pi(\hat{i})$ has been processed by the algorithm there must have been at least $|U|$ vertices not in or neighboring $S_b$. For a specific index $\hat{i}$ this event has probability at most

$$\prod_{i = 0}^{\lambda-1}\frac{n-|U|-\hat{i} - i}{n-\hat{i}-i} \leq \prod_{i = 0}^{\lambda-1}(1 - \frac{|U|}{n-\hat{i}-i}) \leq e^{-\lambda|U|/n} \leq n^{-20}.$$

Union bounding across all $O(n)$ indices $\hat{i}$ could be located at, we still get that this event has probability at most $n^{-10}$.

\end{proof}

We are now ready to upper bound the expected running time of \Cref{alg:two-phase}.

\begin{lemma}
\label{lem:running-time}

\Cref{alg:two-phase} runs in expected $O(\frac{n^{1+1/3} \log^4 n}{\epsilon^2})$ time.

\end{lemma}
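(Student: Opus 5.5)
We split the running time of \Cref{alg:two-phase} into the baseline phase and the sampling phase, and bound each separately.

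\textbf{Baseline phase.} By the loop condition it makes at most $2n^{1+1/3}+O(n)$ queries: each processed vertex costs at most $|S_b|\le n$ queries beyond the threshold check, so the overshoot is $O(n)$. If the loop exhausts $V$, the algorithm stops and the bound holds. Otherwise $r=\frac{100n\log n}{|S_b|\epsilon^2}$ samples are drawn, and it suffices to bound the expected cost of one sample, conditioned on the outcome of the baseline phase.

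\textbf{Cost of one sample.} We condition on $\pi_{\le i_0}$, which determines $S_b$, $\lambda$ and $U$; the suffix of $\pi$ remains uniformly random. For a uniform random $v$, deciding whether $v\in S_b$ or $v\in N_G(S_b)$ costs $O(|S_b|)$ queries. With probability $|U|/n$ we have $v\in U$, and then we call $\mathcal{O}_{S_b}$. Conditioned on $v\in U$, $v$ is uniform on $U$, so \Cref{lem:linear-oracle} bounds the expected cost of this call by $O(n\log^2 n)$. I will read that lemma as holding conditionally on the baseline outcome, with the expectation taken over the unconditioned suffix of $\pi$. If it only holds averaged over $\pi$, I will instead bound $\mathbb{E}[r\cdot T]$ by Cauchy--Schwarz or by splitting according to the value of $|S_b|$. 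The expected cost per sample is therefore $O(|S_b| + \frac{|U|}{n}\, n\log^2 n)$, and the sampling phase costs in expectation
\[
O\!\left(\frac{n\log n}{|S_b|\epsilon^2}\Bigl(|S_b|+|U|\log^2 n\Bigr)\right)=O\!\left(\frac{n\log n}{\epsilon^2}+\frac{n|U|\log^3 n}{|S_b|\epsilon^2}\right).
\]

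\textbf{Bounding $|U|/|S_b|$.} On the event of \Cref{cl:lambda-bound}, we have $|U|\le 100n\log n/\lambda$. Combining this with \Cref{obs:lambdasb},
\[
\frac{|U|}{|S_b|}\le \frac{100 n\log n}{\lambda}\cdot\frac{\sqrt{\lambda+1}}{n^{2/3}} = O\!\left(\frac{n^{1/3}\log n}{\sqrt{\lambda}}\right)=O(n^{1/3}\log n),
\]
using $\lambda\ge 1$. The degenerate case $\lambda=0$ is handled directly by $|U|\le n$ together with $|S_b|\ge n^{2/3}$ from \Cref{obs:lambdasb}. This gives a sampling cost of $O(n^{1+1/3}\log^4 n/\epsilon^2)$. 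On the failure event, of probability at most $n^{-10}$, we use the trivial bound $|U|\le n$ and $|S_b|\ge1$, which gives cost $\poly(n)$. I will check that this polynomial times $n^{-10}$ is negligible: the worst case is about $n^3\log^3 n/\epsilon^2$, which is fine.

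\textbf{Main obstacle.} The delicate point is the conditioning: \Cref{lem:linear-oracle} is stated in expectation over both $\pi$ and $v$, whereas $r$ depends on $\pi$ through $S_b$. I would first try to argue that the proof of \Cref{lem:linear-oracle} in fact holds conditionally on any baseline prefix, since only the randomness of the suffix is used. The bound above then follows from the tower rule. Summing the two phases yields the stated bound.
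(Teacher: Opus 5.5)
Your proposal is correct and follows essentially the same route as the paper: the same split into baseline and sampling phases, the per-sample cost $O(|S_b| + \frac{|U|}{n}\cdot n\log^2 n)$ via \Cref{lem:linear-oracle}, and the bound $|U|/|S_b| = O(n^{1/3}\log n)$ from \Cref{obs:lambdasb} and \Cref{cl:lambda-bound}, with the $n^{-10}$ failure event absorbed by a crude bound. Your prefix-conditional reading of \Cref{lem:linear-oracle} is the right one, and the paper's proof uses it implicitly when it multiplies by the $\pi$-dependent $r$; one caveat is that the conditional $O(n\log^2 n)$ bound holds only for prefixes satisfying $\Delta_{S_b}$, so on the remaining prefixes (probability at most $n^{-10}$) you should use the $O(n^3)$ bound of \Cref{cl:U-oracle:deterministic} instead, which stays negligible after multiplying by $r = O(n\log n/\epsilon^2)$, and the Cauchy--Schwarz fallback is then not needed.
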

\begin{proof}

The algorithm runs in time proportional to its query complexity. By definition, during the construction of the baseline MIS $S_b$ the algorithm makes $O(n^{1+1/3})$ queries to the adjacency matrix. Afterwards, it samples $O(n\log n/(|S_b| \cdot \epsilon^2))$ vertices uniformly at random and decides their membership in $\text{RGMIS}(G,\pi)$.

If a sampled vertex falls in $S_b \cup N_G(S_b)$ then in $O(|S_b|)$ queries the algorithm decides its MIS status. As vertices are sampled independently (with replacement) this has probability $|S_b \cup N_G(S_b)|/n$ for each sampled vertex.

If a sampled vertex falls in $U = V \setminus (S_b \cup N_G(S_b))$ by \Cref{lem:linear-oracle} the algorithm completes $O(n \log^ 2(n))$ queries to the adjacency matrix to decide its MIS status in expectation. This happens with $|U|/n$ probability. Therefore, the expected query complexity of the algorithm is upper bounded by 

\begin{align*}
O(n^{1+1/3} + \frac{n \log n}{|S_b|\epsilon^2} (\frac{|S_b \cup N_G(S_b)|}{n}  |S_b| + \frac{|U|}{n}  n \log^2 n)) & = O(n^{1+1/3} + \frac{|S_b \cup N_G(S_b)| \log n}{\epsilon^2} + \frac{|U|n \log^3 n}{|S_b|\epsilon^2}) \\
& = O(\frac{n^{1+1/3} \log n}{\epsilon^2} + \frac{|U|n \log^3 n}{|S_b|\epsilon^2})
\end{align*}

The second equality holds as $S_b \cup N_G(S_b) \subseteq V$. By \Cref{obs:lambdasb} we have that $|S_b| \geq n^{2/3}/\sqrt{\lambda+1}$. By \Cref{cl:lambda-bound} with high probability $|U| \leq O(n \log n/ \lambda)$. If this is the case the last running time term works out to be upper bounded by $O(n^{1+1/3} \log^4 n/\epsilon^2)$ dominating the query complexity. 

Otherwise, with probability at most $1/n^{10}$, we do not have control over the size of $|U|$. In this case we can only claim that $|U|n/|S_b| \leq n^2$. This event only contributes $O(n^{-8} \log^4n/\epsilon^2)$ to the expected query complexity. Therefore, the expected running time of the algorithm is $O(n^{1+1/3} \log^4 n/\epsilon^2)$.

\end{proof}

\begin{remark}

Note that the running time of the algorithm can be simply improved by an $O(\log^2 n)$ factor by spending that much more time on the construction of $S_b$. In turn this would mean that $U$ ends up being smaller, balancing the runtime of the two components. We ignored this small improvement in order to simplify presentation.

\end{remark}
\section{Query Complexity of $\mathcal{O}_{S_b}$}
\label{sec:query-complexity}

This section is devoted to upper bounding the expected query complexity of \Cref{alg:U-oracle} when queried with a uniform random vertex of $U = V \setminus (S_b \cup N_G(S_b))$, that is to the proof of \Cref{lem:linear-oracle}. Define $\pi_{i_0}$ to be the suffix of permutation $\pi$ \textbf{after} the $i_0$-th vertex (the last vertex fully processed in the first loop). Also, define $\pi_{\le i_0}$ to be the prefix of $\pi$ up to $i_0$. Let $V_{i_0}$ stand for the vertices of $\pi_{i_0}$ and let $V_{\le i_0}$ to be the vertices in the first $i_0$ positions. Similarly, define $\pi_U$ and $\pi_{V_{i_0}\setminus U}$ to be the permutation $\pi$ restricted to the vertices of $U$ and $V_{i_0} \setminus U$ respectively. We will consider the following different types of queries made by the algorithm.

\begin{definition}[$U$-Query$\left(u, v\right)$] \label{query:v-u}
    Given vertices $u \in U$ and  $v \in V$ is $(u,v) \in E$ and if so, is $v$ banned by $S_b$?
\end{definition}

Whenever $\mathcal{O}_{S_b}(\pi, u)$ is called the algorithm iterates through a suffix of $\pi_{i_0}$ and for any $v$ contained in this suffix it completes a $U$-query between $u$ and $v$. Note that the algorithm implements a $U$-query by checking $(u,v) \in E$ first. If an edge exists, the algorithm checks whether $v$ is banned by $S_b$. We will later show that this greatly reduces the expected query complexity of such queries.

\begin{definition} [Real Matrix-Query]
    During a run of \Cref{alg:U-oracle} we define an adjacency query $(u,v)$ made by the algorithm to be a real matrix-query if $u,v \in U$.
\end{definition}

Observe that restricted to the adjacency matrix of $G[U]$ a call to $\mathcal{O}_{S_b}(\pi, u)$ makes the same adjacency queries as $\mathcal{O}_M(G[U],\pi_U,u)$. By \Cref{thm:matrix-query} we know that in expectation the algorithm makes $O(|U|)$ real matrix-queries to $G[U]$.

\begin{definition} [Real List-Query] 
    During a run of \Cref{alg:U-oracle} we define an adjacency query $(u,v)$ made by the algorithm to be a real list-query if $u,v \in U, (u,v) \in E$.
\end{definition}

Similarly, restricted to the adjacency lists of $G[U]$,  $\mathcal{O}_{S_b}(\pi, u)$ behaves identically to $\mathcal{O}_{L}(G[U],\pi,u)$. By \Cref{thm:list-query} we know that the expected number of real list-queries is $O(\Delta(G[U]))$. Note that every real list-query is also a real matrix-query.

\begin{definition} [Spurious Query]
    During a run of \Cref{alg:U-oracle} we define an adjacency query $(u,v)$ made by the algorithm to be a spurious query if $u \in U, v \notin U$.
\end{definition}

Define the query cost of a real query or spurious query between vertices $u,v$ to be the total number of adjacency queries the algorithm makes while checking if $(u,v) \in E$ and if $v \in N_G(S_b)$. Denote the total query costs of the two types of queries as $T^R_{\mathcal{O}_{S_b}}(\pi, u)$ and $T^S_{\mathcal{O}_{S_b}}(\pi, u)$ respectively. We are then interested in upper bounding

$$\exp{\pi, u}{T_{\mathcal{O}_{S_b}}(\pi, u)} = \exp{\pi, u}{T^R_{\mathcal{O}_{S_b}}(\pi, u)} +\exp{\pi,u}{T^S_{\mathcal{O}_{S_b}}(G,\pi,u)}.$$

A crucial tool in our analysis is a folklore lemma governing the maximum degree of the induced sub-graph of un-banned vertices during a run of random order greedy maximal independent set once $t$ vertices have been added to the output. Let $S_b(i)$ and $S_b(\leq i)$ stand for the $i$-th vertex added to $S_b$ and the set of the first $i$ vertices added to $S_b$ during the run of \cref{alg:two-phase}.

\begin{lemma} [Proposition 3.1 of \cite{BehnezhadDHSS19}] \label{lem:U-max-degree}
    For $t \in [n]$ define $U_t = V \setminus (S_b(\leq t) \cup N_G(S_b(\leq t)))$, that is the subset of vertices not banned by the time $S_b$ grew to size $t$. Then with probability at least $1-1/n^{10}$, it holds that,
    
    $$\forall t \in [n] : \ \Delta(G[U_t]) \leq O(\frac{n \log n}{t}).$$
\end{lemma}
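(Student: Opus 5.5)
The plan is a direct probabilistic argument: fix a vertex $w$ and a threshold $D = \frac{c\, n\log n}{t}$, and show that $w$ survives unbanned with degree at least $D$ in the unbanned graph for $t$ consecutive MIS additions with probability at most $e^{-Dt/n}$. A union bound over all $w$ and $t$ then finishes. To make $U_t$ meaningful for every $t\in[n]$, I view the baseline loop as a prefix of the full random-order greedy run with respect to $\pi$. I let $S_b(\le t)$ denote the first $t$ vertices that greedy adds; for $t$ larger than the final MIS size, $U_t=\emptyset$ and the bound is trivial.

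\textbf{Step 1 (each new MIS vertex is uniform over the unbanned vertices).} Condition on the history up to the moment the $s$-th MIS vertex is added. This history fixes the processed prefix of $\pi$, and hence fixes $U_s$. Every vertex of $U_s$ is still unprocessed, since a processed vertex is either in the MIS or banned. Conversely, every unprocessed vertex outside $U_s$ is banned and will be skipped. The remaining suffix of $\pi$ is a uniformly random ordering of the unprocessed vertices. Hence the $(s+1)$-st MIS vertex, which is the first vertex of $U_s$ in that suffix, is uniform over $U_s$.

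\textbf{Step 2 (one-round survival bound and monotonicity).} Let $d_s(w)$ denote the degree of $w$ in $G[U_s]$. Suppose $w\in U_s$ and $d_s(w)\ge D$. If the next MIS vertex lands in $\{w\}\cup (N_G(w)\cap U_s)$, then $w\notin U_{s+1}$. By Step 1 this happens with probability at least $D/|U_s|\ge D/n$. Moreover, $U_{s+1}\subseteq U_s$, so $d_s(w)$ is nonincreasing in $s$. Therefore the event $E_t(w)=\{w\in U_t,\ d_t(w)\ge D\}$ implies the event $E_s(w)$ for every $s\le t$. Chaining the conditional bounds gives
\[
\prob{E_t(w)}\le \prod_{s=0}^{t-1}\prob{E_{s+1}(w)\mid E_s(w),\text{history}}\le \left(1-\frac{D}{n}\right)^t\le e^{-Dt/n}=n^{-c}.
\]

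\textbf{Step 3 (union bound).} Taking $c=12$ and a union bound over the $n$ choices of $w$ and the $n$ choices of $t$ bounds the failure probability by $n^{-10}$. On the complementary event, $\Delta(G[U_t])< 12n\log n/t$ holds simultaneously for all $t$.

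The only delicate point is Step 1: I need the conditioning to be on a stopping-time history, namely the prefix of $\pi$ up to the $s$-th MIS addition, so that the uniformity of the remaining suffix is legitimate. I also need the fact that banned-but-unprocessed vertices are skipped, so they do not bias which vertex of $U_s$ is selected. Once this is in place, the rest is routine.
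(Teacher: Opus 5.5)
Your proof is correct. The paper gives no argument of its own and simply imports the statement from Proposition 3.1 of \cite{BehnezhadDHSS19}. The standard proof behind that proposition chains over \emph{positions} of $\pi$ rather than over MIS additions. Given the first $i$ revealed vertices, $\pi(i+1)$ is uniform over the $n-i$ unprocessed vertices. If it is one of the at least $D$ unbanned neighbors of $w$, it joins the MIS and bans $w$. Hence $w$ keeps residual degree at least $D$ through $t$ positions with probability at most $(1-D/n)^t$.

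That version conditions only on fixed-length prefixes, so it needs no stopping-time argument. It also gives the stronger variant the paper mentions right after the lemma: the degree bound holds once $t$ vertices have been \emph{considered}. The stated form then follows because $S_b$ reaches size $t$ only after at least $t$ vertices have been processed, so $U_t$ is contained in the residual set after $t$ positions.

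Your route instead counts MIS additions. That is why you need Step 1: the next MIS vertex is the first vertex of $U_s$ in a uniformly random suffix, conditioned on a stopping-time prefix. In exchange, it lands on exactly the stated form without the monotonicity detour. Your chaining is sound when read as $\mathbb{P}[E_{s+1}(w)] \le (1-D/n)\,\mathbb{P}[E_s(w)]$, using $E_{s+1}(w)\subseteq E_s(w)$.

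Defining everything on the full greedy run, with $S_b$ as its prefix, is also a good choice. The baseline loop halts at a data-dependent query budget, and your formulation makes the high-probability event independent of where that truncation occurs.
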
 

Note that a slightly stronger variant of \Cref{lem:U-max-degree} also holds where instead of focusing on when $S_b$ grew to size $t$, we consider the time the $t$-th vertex was considered to be added to $S_b$. For our application, in the tight instances almost every vertex considered to be added to $S_b$ is added to $S_b$ making this weaker statement sufficient.

Let $\Delta_{S_b}$ stand for the residual degree bound event of \Cref{lem:U-max-degree} holding for any $t \leq |S_b|$. Most of the following discussion will be conditioned on $\Delta_{S_b}$. Furthermore, assume throughout this section that we \textbf{always} condition on $\pi$-s contents until $\pi^{-1}(i_0)$. Observe that under this conditioning $\pi_{i_0}$ is a uniform random permutation of $V_{i_0}$ and conditioning on this prefix fixes the size of $U$ and $V_{i_0}$.

\begin{claim}

\label{cl:adjacency-list:total}

$\exp{\pi_{i_0},u}{T^R_{\mathcal{O}_{S_b}}(\pi,u) | \Delta_{S_b}} = O(n \log n)$

\end{claim}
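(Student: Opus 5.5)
We need to bound the expected total query cost of the real queries, meaning queries $(u,v)$ with both endpoints in $U$, during a call $\mathcal{O}_{S_b}(\pi,u)$ with $u$ uniform in $U$. We condition on $\pi_{\le i_0}$ and on $\Delta_{S_b}$. The cost of a real query splits into two parts. The first part is the single adjacency query $(u,v)$ itself. The second part occurs only when $(u,v)\in E$: it is the scan over $S_b(1),S_b(2),\dots$ checking whether $v$ is banned. Since $v\in U$, no vertex of $S_b$ is adjacent to $v$, so this scan runs to the end and costs exactly $|S_b|$ queries. Hence
$$T^R_{\mathcal{O}_{S_b}}(\pi,u) \le (\#\text{real matrix-queries}) + |S_b|\cdot(\#\text{real list-queries}).$$

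For the first term, the call restricted to $U$ makes exactly the adjacency queries of $\mathcal{O}_M(G[U],\pi_U,u)$, as observed before the definition of real list-queries. Under our conditioning, $\pi_U$ is a uniform random permutation of $U$ and $u$ is uniform in $U$. So \Cref{thm:matrix-query} applied to $G[U]$ gives an expected $O(|U|)=O(n)$ real matrix-queries.

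For the second term, the real list-queries are the recursive steps of $\mathcal{O}_L(G[U],\pi_U,u)$. By \Cref{thm:list-query}, their expected number is $O(\Delta(G[U]))$. Since $U=U_{|S_b|}$, the event $\Delta_{S_b}$ together with \Cref{lem:U-max-degree} at $t=|S_b|$ gives $\Delta(G[U])=O(n\log n/|S_b|)$. Multiplying by the scan cost $|S_b|$ yields $O(n\log n)$ for the second term, and adding the two terms gives the claimed bound.

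The main obstacle is legitimacy of the conditioning. We have to check that conditioning on $\pi_{\le i_0}$ and on $\Delta_{S_b}$ leaves $\pi_U$ uniform, so that the expectation bounds of \Cref{thm:matrix-query,thm:list-query} still apply. The event $\Delta_{S_b}$ depends only on the prefix $\pi_{\le i_0}$, which determines $S_b$ and all the sets $U_t$. Once the prefix is fixed, $\Delta_{S_b}$ is therefore a deterministic statement, and the suffix $\pi_{i_0}$, and hence $\pi_U$, remains uniform. A minor point is that the real queries the algorithm actually makes are those of $\mathcal{O}_M$ on $G[U]$. This holds because the algorithm recurses only on neighbors confirmed to be in $U$, so recursion on vertices outside $U$ never happens. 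Every real query is thus a query of the simulated oracle on $G[U]$, and these are accounted for exactly once.
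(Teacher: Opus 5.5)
Your proposal is correct and follows the paper's proof. It uses the same split into $O(|U|)$ unit-cost real matrix-queries (via \Cref{thm:matrix-query} on $G[U]$) and $O(\Delta(G[U]))$ real list-queries of cost $|S_b|+1$ each (via \Cref{thm:list-query}), with $\Delta(G[U])=O(n\log n/|S_b|)$ from \Cref{lem:U-max-degree} at $t=|S_b|$. You also make explicit two points the paper takes for granted: that $\Delta_{S_b}$ is determined by the prefix up to $i_0$, so $\pi_U$ stays uniform, and that recursion only ever occurs on vertices of $U$.
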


\begin{proof}

By \Cref{thm:list-query} and \Cref{thm:matrix-query} the $\mathcal{O}_{S_b}(\pi, u)$ is expected to complete $O(|U|)$ real matrix-queries and $O(\Delta(G[U]))$ real list-queries. When the algorithm completes a real list-query between vertices $(u,v)$ it needs to verify that $v \notin N_G(S_b)$ and $(u,v) \in E$. This requires $|S_b| + 1$ adjacency queries to $G$. When the algorithm completes a real matrix-query that is not a real list-query between the same vertices it must find that $(u,v) \notin E$ which requires a single query to $G$-s adjacency matrix. This implies,

\begin{align}
\exp{\pi_{i_0},u}{T^R_{\mathcal{O}_{S_b}}(\pi,u) | \Delta_{S_b}} & = O((|S_b| +1) \Delta(G[U]) + |U|)  \nonumber \\
& = O((|S_b| + 1) \cdot \frac{n \log n}{|S_b|} + |U|) \label{eq:cl:real}\\
& = O(n \log n) \nonumber
\end{align}

Here \Cref{eq:cl:real} follows from \Cref{lem:U-max-degree}.

\end{proof}

We now turn to upper bounding $\exp{\pi_{i_0},u}{T^S_{\mathcal{O}_{S_b}}(\pi,u) | \Delta_{S_b}}$ which will prove to be more difficult. We first prove the following surprising claim.

\begin{claim}
\label{cl:U-query:bound}

Let $u$ be an arbitrary vertex of $U$ and $v$ be a random vertex of $V_{i_0} \setminus U$. $\mathcal{O}_{S_b}$ completes a $U$-query between $u$ and $v$ in expected $O(\frac{n \log^2 n}{|V_{i_0} \setminus U|})$ adjacency queries when conditioned on $\Delta_{S_b}$.

\end{claim}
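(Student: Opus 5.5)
We bound the cost of a $U$-query $(u,v)$ with $v$ uniform in $V_{i_0}\setminus U$ by splitting it into its two parts. The first part is the single adjacency query $(u,v)$; it always costs $1$. The second part is incurred only when $(u,v)\in E$. In that case the inner while loop of \Cref{alg:U-oracle} scans $S_b(1),S_b(2),\dots$ in order until it finds the first $S_b(j)$ adjacent to $v$. Such a $j$ exists because $v\notin U$ and $v\in V_{i_0}$, so $v\notin S_b$ and $v\in N_G(S_b)$. The cost of this part is exactly $j(v)$, the index of the first baseline vertex banning $v$. The expected cost is therefore
\[
1+\frac{1}{|V_{i_0}\setminus U|}\sum_{v\in N_G(u)\cap (V_{i_0}\setminus U)} j(v),
\]
and it suffices to show $\sum_{v\in N_G(u)\setminus U} j(v)=O(n\log^2 n)$. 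The additive $1$ is harmless, since $|V_{i_0}\setminus U|\le n$.

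To bound the sum, group the neighbours $v$ of $u$ dyadically by $j(v)$. Let $A_k=\{v\in N_G(u): 2^{k-1}< j(v)\le 2^k\}$ for $k=0,1,\dots,\lceil\log|S_b|\rceil$. Each $v\in A_k$ contributes at most $2^k$, so the sum is at most $\sum_k 2^k|A_k|$. The key point is that every $v\in A_k$ is unbanned at time $t=2^{k-1}$, meaning $v\in U_{t}$. The vertex $u\in U$ is also unbanned at every time, so $u\in U_t$. Hence $A_k\subseteq N_{G[U_t]}(u)$, and under $\Delta_{S_b}$ \Cref{lem:U-max-degree} gives $|A_k|\le\Delta(G[U_t])=O(n\log n/2^{k-1})$. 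For $k=0$ we use the trivial bound $|A_0|\le n$. Each dyadic class then contributes $O(n\log n)$, and there are $O(\log n)$ classes. This gives $\sum_v j(v)=O(n\log^2 n)$ and hence the claimed bound $O(n\log^2 n/|V_{i_0}\setminus U|)$.

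The main care point is checking that the degree bound of \Cref{lem:U-max-degree} applies in the induced graph $G[U_t]$, which requires both endpoints to be unbanned at time $t$. For $v$ this holds because $j(v)>t$ means none of $S_b(\le t)$ bans $v$. For $u$ it holds because $u$ is never banned. The conditioning on the prefix of $\pi$ also needs a check. Fixing $\pi_{\le i_0}$ fixes $S_b$, $U$ and all $j(v)$, and $\Delta_{S_b}$ is an event on this prefix, so the bound holds pointwise for every conditioned prefix. The only randomness left is the uniform choice of $v$.
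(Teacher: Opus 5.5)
Your proof is correct and follows essentially the same route as the paper. The paper partitions the banned neighbours of $u$ into the layers $U_i \setminus U_{i+1}$ of unbanned sets at times $2^i$, which are exactly your dyadic classes by $j(v)$, bounds each layer by cost $O(2^{i})$ times $O(n\log n/2^i)$ neighbours of $u$ via \Cref{lem:U-max-degree}, and sums over $O(\log n)$ layers. Your version is in fact slightly cleaner: it makes explicit why both $u$ and $v$ lie in $U_t$, and it handles the $j(v)=1$ class and the conditioning on the prefix directly.
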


\begin{proof}
Define $U_i=V\setminus(S_b(\le max\{2^i, |S_b|\})\cup N_G(S_b(\le max\{2^i, |S_b|\})))$ for
$i\in\{0,\ldots,L\}$  with $U_{L + 1} = U$, where $L=\lceil \log_2 |S_b|\rceil$. Condition on the event $\Delta_{S_b}$, which implies by \Cref{lem:U-max-degree} that for every $i$, $\Delta(G[U_i]) =  O(n\log n/2^i)$ and $\Delta(G[U]) = O( n \log n/|S_b|)$. By definition, $U_i \supseteq U_{i+1}$ for $i \in \{0,\dots,L-1\}$ and $u$ has at most $O(n\log n/2^i)$ neighbors in $U_i \setminus U_{i+1}$. 

Similarly, $u$ has at most $O(\frac{n \log n}{|S_b|})$ neighbors in $U$. Define the query cost of the $U$-query as $T_U(u,v)$. If $v \in U$ then $\mathcal{O}_{S_b}$ needs to query $v$-s complete neighborhood with $S_b$ and $(u,v)$ which takes $|S_b| + 1$ queries. If $v \in U_i \setminus U_{i+1}$ then to answer the same query $\mathcal{O}_{S_b}$ only needs to query $(u,v)$ and $v$-s neighborhood in $S_b(\le max\{2^i, |S_b|\})$ which takes at most $2^{i+1} +1$ queries. Therefore,

\begin{align*}
\exp{\pi_{i_0},v}{T_U(u,v) | \Delta_{S_b}} & \leq 1 + \sum_{i = 0}^{L} \prob{v}{v \in N_G(u) \cap (U_i \setminus U_{i+1}) | \Delta_{S_b}} \cdot 2^{i + 1} + \prob{v}{v \in N_G(u) \cap U | \Delta_{S_b}} \cdot |S_b|\\
& \leq 1 + \sum_{i = 0}^{L} O(\frac{n \log n}{2^i |V_{i_0} \setminus U|} \cdot 2^{i+1}) + O(\frac{n \log n}{|S_b| \cdot |V_{i_0} \setminus U|} \cdot |S_b|) \\
& = O(\frac{n \log^2 n}{|V_{i_0} \setminus U|}).
\end{align*}

\end{proof}

We are now ready to derive our upper bound $\exp{\pi_{i_0},u}{T^S _{\mathcal{O}_{S_b}}(\pi,u) | \Delta_{S_b}}$.

\begin{lemma}

\label{lem:spurious:upper}

$\exp{\pi_{i_0},u}{T^S _{\mathcal{O}_{S_b}}(\pi,u) | \Delta_{S_b}} = O(n \log^2 n)$

\end{lemma}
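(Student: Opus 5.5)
We decompose $T^S_{\mathcal{O}_{S_b}}(\pi,u)$ according to the vertices $w$ from whose neighbourhood scan the spurious queries are issued. A run of $\mathcal{O}_{S_b}(\pi,u)$ recursively calls the oracle on a set of vertices $R(u)\subseteq U$. For every $w\in R(u)$, the loop of \Cref{alg:U-oracle} performs $U$-queries $(w,x)$ for a prefix of $\pi_{i_0}$ up to $\pi^{-1}(w)$. Among these, the spurious ones are those with $x\in V_{i_0}\setminus U$.

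The plan is to couple the randomness as follows. The relative order $\pi_U$ of $U$ determines the entire recursion tree, because \Cref{alg:U-oracle} restricted to $U$ is exactly $\mathcal{O}_M(G[U],\pi_U,\cdot)$. The interleaving of $V_{i_0}\setminus U$ into $\pi_{i_0}$ is independent of $\pi_U$ and is uniformly random. Conditioned on $\pi_U$, the vertices of $V_{i_0}\setminus U$ scanned before reaching a given position form a random subset. So the expected number of spurious $U$-queries issued while the scan of $w$ advances past one real position (one step of the $G[U]$ matrix oracle) is about $|V_{i_0}\setminus U|/|U|$. Each such spurious vertex is marginally a uniform random vertex of $V_{i_0}\setminus U$.

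We apply \Cref{cl:U-query:bound} to each spurious $U$-query. Its expected cost is $O(n\log^2 n/|V_{i_0}\setminus U|)$, and this holds for an arbitrary fixed $w\in U$, so it applies even though $w$ depends on the recursion. By linearity of expectation, the expected spurious cost per real matrix-query step is
\[
O\Bigl(\tfrac{|V_{i_0}\setminus U|}{|U|}\cdot \tfrac{n\log^2 n}{|V_{i_0}\setminus U|}\Bigr)=O\Bigl(\tfrac{n\log^2 n}{|U|}\Bigr).
\]
The number of real matrix-query steps has expectation $O(|U|)$ by \Cref{thm:matrix-query} applied to $G[U]$. The product is $O(n\log^2 n)$.

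One boundary effect must be handled: the final stretch of each scan after its last real position, e.g.\ scans that terminate early or run up to $\pi^{-1}(w)$. This contributes at most one extra block per recursive call, which is absorbed by charging to the call itself, since there are at most $O(|U|)$ calls in expectation.

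The main obstacle is justifying the independence used in the per-step argument. The number of real steps depends on $\pi_U$, and the identity of each spurious vertex depends on the interleaving. Moreover, a given spurious vertex $x$ can be queried from several different $w$, and its cost in \Cref{cl:U-query:bound} depends on whether $x\in N_G(w)$. I would handle this with a Wald-type argument: condition on $\pi_U$ (which fixes the sequence of calls and real steps), and then note that each gap between consecutive $U$-positions in $\pi_{i_0}$ has expected size $|V_{i_0}\setminus U|/(|U|+1)$. Given its position, the content of a gap is a uniformly random subset of $V_{i_0}\setminus U$, so each member is uniform in $V_{i_0}\setminus U$. This is valid because the edge structure is fixed and only the ordering is random. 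The per-pair bound of \Cref{cl:U-query:bound} then applies to every pair $(w,x)$ regardless of reuse, and summing over gaps yields the claim.
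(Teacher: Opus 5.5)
Your proposal is correct and follows essentially the paper's argument. The paper also fixes the $G[U]$-recursion through $\pi_U$, and bounds the expected number of $V_{i_0}\setminus U$ vertices interleaved before the $x$-th $U$-position by the negative hypergeometric mean $x(|V_{i_0}|+1)/(|U|+1)$, which is your gap-exchangeability bound. It then charges each spurious $U$-query $O(n\log^2 n/|V_{i_0}\setminus U|)$ via \Cref{cl:U-query:bound} and multiplies by $\mathbb{E}[\bar{X}]=O(|U|)$. Your explicit treatment of the extra gap up to each called vertex's own position is a small point the paper's $Y_i$ bookkeeping glosses over.
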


\begin{proof}

First consider the adjacency queries made by $\mathcal{O}_M(G[U],\pi_U,u)$. Note that  $\mathcal{O}_{S_b}(\pi, u)$ makes the same adjacency queries to $G[U]$. $\mathcal{O}_{S_b}(\pi,u)$ will be queried for a series of vertices $v_1,\dots v_k$. Here $v_1 = u$ and as we assume the algorithm does not rely on caching, some vertices might appear multiple times in this series. When evaluating vertex $v_i$ it will query $v_i$-s neighborhood to some prefix of $\pi_U$, until the $X_i$-th position. For any $w$ in this prefix such that $(v_i,w) \in E$ the oracle also calls $\mathcal{O}_M(G[U],\pi_U,w)$. Here we will overload notation and define $[X_i]$ as the set of indices of the first $X_i$ vertices in $\pi_U$, which $v_i$ queries for adjacency, mapped to the original permutation $\pi$. 

Let $\bar{X} = \sum_{i \leq k} X_i$. (Therefore, $\bar{X} \ge k - 1$.) As $\pi_U$ is drawn uniformly at random (even when conditioned on the prefix of $\pi$ to the $i_0$-th element and $\Delta_{S_b})$ by \Cref{thm:matrix-query} we know that $\exp{\pi_{i_0},u}{\bar{X}| \Delta_{S_b}} = O(|U|)$. $\mathcal{O}_{S_b}(\pi, u)$ will be queried with the same series of vertices. For each $v_i$ it will complete $U$-queries between $v_i$ and some prefix of $\pi_{i_0}$, until the $Y_i$-th element where $\pi_{i_0}(Y_i) = \pi_U(X_i)$. Define $\bar{Y} = \sum_{i \leq k} Y_i$. Here, $[Y_i]$ does not need extra clarification, as it refers to the first $Y_i$ vertices, which $v_i$ does a $U$-query for, in the original permutation $\pi$ itself.

\begin{claim}
\label{cl:spurious:Yexp}
$\exp{\pi}{\bar{Y} |\bar{X} = x, \Delta_{S_b}} \leq \frac{x(|V_{i_0}| + 1)}{|U| + 1}$
\end{claim}

\begin{proof}

When $\mathcal{O}_{S_b}(G,\pi,v_i)$ is called the algorithm will complete $U$-queries between $v_i$ and a prefix of $\pi_{i_0}$ until the $X_i$-th element of $U$. The distribution describing how many elements of $V_{i_0}$ appear before the $x$-th element of $U$ is a negative hyper-geometric distribution with parameters $|V_{i_0}|,|U|,x$. Its expectation is known to be:

$$\exp{\pi}{Y_i | X_i = x} = \frac{x(|V_{i_0}|+1)}{|U|+1} \leq \frac{x (|V_{i_0}| + 1)}{|U| + 1}$$

Then, by conditioning on the series $X_{v_1},\dots,X_{v_k}$,

\begin{align*}
\exp{\pi_{i_0},u}{\bar{Y} |\bar{X} = x, \Delta_{S_b}, X_i = x_i \text{ for } i \leq k} & = \sum_{i \leq k} \exp{\pi}{Y_i | X_i = x_i, \Delta_{S_b}} \\
& \leq \sum_{i \leq k} \frac{x_i (|V_{i_0}| + 1)}{|U| + 1} = \frac{x (|V_{i_0}| + 1)}{|U| + 1}.
\end{align*}

The claim follows by summing over all potential conditionings on $x_1,\dots,x_k$.

\end{proof}

During the run of $\mathcal{O}_{S_b}(\pi, u)$ when querying $v_i$ the algorithm will complete $Y_i - X_i \leq Y_i$ spurious queries as it will query all edges between $v_i$ and the first $Y_i - X_i$ vertices of $\pi_{i_0}$ which are not in $U$. Each of these will only require a $U$-query. Let $T_U(v_i,i)$ stand for the cost of a $U$-query between $(v_i,\pi_{i_0}(i))$. Then, conditioning on $\bar{Y},\bar{X},\Delta_{S_b}$ the total query cost of all the spurious queries the algorithm makes can be upper bounded by:

\begin{align}
    &\exp{\pi_{i_0}, u}{T_{\mathcal{O}_{S_b}}^S(\pi, u) \mid \Delta_{S_b}, \bar{Y}=y, \bar{X} = x} \nonumber \\
    &\quad = \exp{u}{\exp{\pi_U}{\begin{aligned}[t]
        &\exp{\pi_{V_{i_0} \setminus U}}{\sum_{i \leq k} \sum_{j \in [Y_i] \setminus [X_i]} T_U(v_i,\pi_{i_0}(j))} \\
        &\quad \mid \Delta_{S_b}, \bar{Y}=y, \bar{X} = x \end{aligned}}} \nonumber \\
    &\quad = \exp{u}{\exp{\pi_U}{\begin{aligned}[t]
        &\sum_{i \leq k} \sum_{j \in [Y_i] \setminus [X_i]} \exp{\pi_{V_{i_0} \setminus U}}{T_U(v_i,\pi_{i_0}(j))} \\
        &\quad \mid \Delta_{S_b}, \bar{Y}=y, \bar{X} = x \end{aligned}}} \nonumber \\
    &\quad = \exp{u}{\exp{\pi_U}{\begin{aligned}[t]
        &\sum_{i \leq k} \sum_{j \in [Y_i] \setminus [X_i]} \exp{\pi_{V_{i_0} \setminus U}}{T_U(v_i,\pi_{i_0}(j)) \mid \Delta_{S_b}} \\
        &\quad \mid \bar{Y}=y, \bar{X} = x \end{aligned}}} \nonumber \\
    &\quad = \exp{u}{\exp{\pi_U}{\begin{aligned}[t]
        &\sum_{i \leq k} \sum_{j \in [Y_i] \setminus [X_i]} \frac{n \log ^2 n}{|V_{i_0} \setminus U|}\\
        &\quad \mid \bar{Y}=y, \bar{X} = x \end{aligned}}} \label{eq:spurious:1}\\
    &\quad \leq O\left(\frac{n (y - x) \log^2 n}{|V_{i_0} \setminus U|}\right) \nonumber.
\end{align}

Here, all vertices $v_i$ are in $U$ and $\pi_{i_0}(j)$ is a uniform random vertex of $V_{i_0} \setminus U$ for any $j \in [Y_i] \setminus [X_i]$. Furthermore, . Therefore, we can invoke \Cref{cl:U-query:bound} to get equation \Cref{eq:spurious:1}. In the following derivation assume throughout that we are conditioning on $\Delta_{S_b}$ which we will omit from the notation for the sake of convenience.

\begin{align}
\exp{\pi_{i_0},u}{T_{\mathcal{O}_{S_b}}^S(\pi, u)} & = \sum_{x}\prob{\pi,u}{\bar{X} = x} \exp{\pi_{i_0},u}{T_{\mathcal{O}_{S_b}}^S(\pi, u)| \bar{X} = x} \nonumber \\
& = \sum_{x}\prob{\pi,u}{\bar{X} = x} \sum_{y} \prob{\pi,u}{\bar{Y} = y \mid \bar{X} = x}\exp{\pi_{i_0},u}{T_{\mathcal{O}_{S_b}}^S(\pi, u) |  \bar{X} = x, \bar{Y}= y} \nonumber \\
& \leq \sum_{x}\prob{\pi,u}{\bar{X} = x} \sum_{y} \prob{\pi,u}{\bar{Y} = y \mid \bar{X} = x} \cdot (y - x) \cdot O(\frac{n \log ^2 n}{|V_{i_0} \setminus U|}) \label{eq:spurious:3} \\
& = \sum_{x}\prob{\pi,u}{\bar{X} = x} \left(\exp{\pi_{i_0},u}{\bar{Y} | \bar{X} = x } - x\right) \cdot O(\frac{n \log ^2 n}{|V_{i_0} \setminus U|}) \nonumber\\
& \leq \sum_{x}\prob{\pi,u}{\bar{X} = x} \cdot \left(O(\frac{x(|V_{i_0}| + 1)}{|U| + 1}) - x\right) \cdot  O(\frac{n \log ^2 n}{|V_{i_0}| \setminus U}) \label{eq:spurious:4}\\
& = \exp{\pi_{i_0},u}{\bar{X}} \cdot O(\frac{n \log^2 n}{|U|}) \nonumber \\
& = O(n \log^2 n) \label{eq:spurious:5}
\end{align}

\Cref{eq:spurious:3}, \Cref{eq:spurious:4} and \Cref{eq:spurious:5} follow from \Cref{eq:spurious:1}, \Cref{cl:spurious:Yexp} and \Cref{thm:matrix-query} respectively.

\end{proof}

If we are satisfied with conditioning on the high probability event $\Delta_{S_b}$ we can already conclude \Cref{lem:linear-oracle}. To make sure that the expected query bound holds in general we have to upper bound $\exp{\pi_{i_0},u}{T_{\mathcal{O}_{S_b}}^S(\pi, u) | \neg \Delta_{S_b}}$.

\begin{claim}
\label{cl:U-oracle:deterministic}

$$\exp{\pi_{i_0},u}{T_{\mathcal{O}_{S_b}}(\pi, u) | \neg \Delta_{S_b}} = O(n^3)$$

\end{claim}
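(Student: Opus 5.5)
The plan is to prove the bound deterministically, for every fixed $\pi$ and every $u \in U$, so that it holds under any conditioning, in particular on $\neg \Delta_{S_b}$. Nothing about the event $\Delta_{S_b}$ is used. We only need a worst-case count of the adjacency queries made by \Cref{alg:U-oracle}. A naive recursion bound would be exponential, so the argument has to exploit the structure of the recursion.

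\textbf{Step 1: each vertex is evaluated once (with caching).} The key structural fact concerns the recursion of $\mathcal{O}_{S_b}(\pi,u)$. When a vertex $w$ is evaluated, it recursively calls only vertices $\pi(i)$ with $i < \pi^{-1}(w)$. Moreover, $w$ stops scanning as soon as some recursive call returns $1$. Without caching, the number of calls can still blow up. Rather than bound calls directly, I would argue with memoization of the returned values. Each distinct vertex $w \in V_{i_0}$ is then fully evaluated at most once, and repeated calls to $w$ cost $O(1)$ queries. This replaces the same recursion with an equivalent one whose answers are identical. If one insists on the literal non-caching algorithm, I would instead assert the convention that the implementation caches answers, as is standard for these oracles, and state this explicitly.

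\textbf{Step 2: cost of a single evaluation.} One evaluation of $w$ iterates over at most $n$ indices $i$. For each index it makes one adjacency query $(w,\pi(i))$. If that edge exists, it makes at most $|S_b| \le n$ further queries checking $\pi(i)$ against $S_b$. A single evaluation of $w$, excluding its recursive children, therefore costs at most $n(n+1) = O(n^2)$ queries.

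\textbf{Step 3: combine.} With memoization there are at most $n$ distinct vertices evaluated, giving a total of $O(n \cdot n^2) = O(n^3)$ queries, deterministically. Cache hits add only $O(1)$ each, and there are at most $O(n^2)$ of them, since each occurs at a scanned index. Since the bound holds for every $\pi$ and $u$, taking the expectation conditioned on $\neg\Delta_{S_b}$ gives the claim. Finally, $\Pr[\neg\Delta_{S_b}] \le n^{-10}$ by \Cref{lem:U-max-degree}, so this case contributes only $O(n^{-7})$ to the unconditional expectation in \Cref{lem:linear-oracle}.

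\textbf{Main obstacle.} The hard step is Step 1, justifying that distinct-vertex evaluations are bounded by $n$. This is where the non-caching pseudocode could in principle force exponential work. I expect to resolve it either by the memoization convention or by noting that the analysis elsewhere, via \Cref{thm:matrix-query}, is insensitive to caching, which can only reduce queries.
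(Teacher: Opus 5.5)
\textbf{Gap: your Step 1 changes the algorithm.} Your deterministic $O(n^3)$ bound applies only to a memoized variant of \Cref{alg:U-oracle}. The claim concerns the oracle as the paper analyzes it, and the paper states in the proof of \Cref{lem:spurious:upper} that it ``does not rely on caching, [so] some vertices might appear multiple times''.

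For that non-caching recursion Step 1 is false, and no pointwise polynomial bound holds. Take vertices ordered in $\pi$ as $w_1,z_1,z'_1,w_2,z_2,z'_2,\dots$. Make $w_j$ adjacent to $z_j$ and $z'_j$, and make each of $z_j,z'_j$ adjacent to both $z_{j-1}$ and $z'_{j-1}$. Then every $w_j$ is in the MIS and every $z$ is not. Evaluating $z_j$ fully re-evaluates both $z_{j-1}$ and $z'_{j-1}$ (each returns $0$) before reaching $w_j$. The work therefore doubles per level, and a fixed bad permutation costs $2^{\Omega(n)}$.

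Your fallback, that caching can only reduce queries, is true but points the wrong way. It gives $T^{\mathrm{memo}} \le T_{\mathcal{O}_{S_b}}$ pointwise, so an upper bound on the memoized cost says nothing about $T_{\mathcal{O}_{S_b}}$. The same inequality does show something useful: if the oracle is redefined to memoize, every other upper bound in the paper carries over. So your route is a legitimate repair of \Cref{lem:linear-oracle} for a modified algorithm, with a stronger worst-case statement for every $u$. But that proves a different claim, and you would have to say so explicitly.

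\textbf{Missing idea: use the randomness of the suffix.} For the claim as stated, a worst-case count cannot work; the argument has to use the randomness of $\pi$ after $i_0$.
\begin{itemize}
\item The event $\neg\Delta_{S_b}$ is determined by the prefix of $\pi$ up to $i_0$. Conditioned on it, $\pi_U$ is still a uniform random permutation of $U$.
\item \Cref{thm:matrix-query} then bounds the expected number of real matrix-queries by $O(|U|)=O(n)$.
\item Every recursive call other than the root is triggered by a real list-query, which is also a real matrix-query. So the expected number of calls is $O(n)$ as well.
\item Each call scans at most $|V_{i_0}|$ positions, and each $U$-query costs at most $|S_b|+1$ adjacency queries.
\end{itemize}
The expected total is therefore $O(n)\cdot |V_{i_0}|\cdot(|S_b|+1)=O(n^3)$, which is exactly the paper's argument.
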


\begin{proof}

Even if $\Delta_{S_b}$ does not hold, the permutation $\pi_U$ is drawn uniformly at random over $U$. Hence, by \Cref{thm:matrix-query} we have that 
$$\exp{\pi_{i_0},u}{T^R _{\mathcal{O}_{S_b}}(G[U],\pi,u) | \neg \Delta_{S_b}} = \exp{\pi_{i_0},u}{T_{\mathcal{O}_M}(G[U],\pi,u) | \neg \Delta_{S_b}} = O(|U|) = O(n).$$
$\mathcal{O}_{S_b}(\pi, u)$ completes the queries of $\mathcal{O}_M(G[U],\pi,u)$. Between any two queries made by the latter oracle, the former oracle might complete $U$-queries between the queried node and the whole of $V_{i_0}$. A $U$-query is executed through at most $|S_b|+1$ adjacency queries. Therefore,

\begin{align*}
\exp{\pi_{i_0},u}{T_{\mathcal{O}_{S_b}}^S(\pi, u) | \neg \Delta_{S_b}} & \leq \exp{\pi_{i_0},u}{T_{M}(G[U],\pi_U,u) | \neg \Delta_{S_b}} \cdot |V_{i_0}| \cdot (|S_b| + 1) \\
&  = O(n^3)
\end{align*}

\end{proof}

We are now ready to formally bound $\exp{\pi_{i_0},u}{T_{\mathcal{O}_{S_b}}(\pi, u)}$.

\begin{align*}
\exp{\pi_{i_0},u}{T_{\mathcal{O}_{S_b}}(\pi, u)} & = \prob{\pi}{\Delta_{S_b}}\exp{\pi_{i_0},u}{T_{\mathcal{O}_{S_b}}(\pi, u)|\Delta_{S_b}} + \prob{\pi}{\neg\Delta_{S_b}}\exp{\pi_{i_0},u}{T_{\mathcal{O}_{S_b}}(\pi, u)|\neg \Delta_{S_b}} \\
& \leq \exp{\pi_{i_0},u}{T_{\mathcal{O}_{S_b}}(\pi, u)|\Delta_{S_b}} + O(\frac{1}{n^4}) \\
& = \exp{\pi_{i_0},u}{T^S_{\mathcal{O}_{S_b}}(\pi, u) + T^R_{\mathcal{O}_{S_b}}(\pi, u)|\Delta_{S_b}} + O(\frac{1}{n^4}) \\
& = O(n \log^2 n)
\end{align*}

Note that throughout these derivations we have conditioned on the prefix of $\pi$ up till the $i_0$-th element. As none of the arguments depend on the contents of the actual prefix (just the event $\Delta_{S_b}$) we may sum across all potential conditioning of the prefix of $\pi$.

\section{Sublinear Metric $k$-Center via Independent Set}

\label{sec:k-center}

\newcommand{\costk}{\operatorname{cost}}
\newcommand{\OPTk}{\operatorname{OPT}}
\newcommand{\MISk}{\operatorname{MIS}}

In the metric $k$-center problem, we are given a set $P$ of $n$ points in a metric space. We may query pairwise distances between points. The goal is to find $C \subseteq P$, $|C| \leq k$, minimizing $\costk(C) = \max_{p \in P} d(p,C)$. Let $\OPTk_k(P)$ denote the cost of an optimum solution. Let $\Delta$ denote the aspect ratio of $P$, that is an upper bound on the ratio of the maximum and minimum pairwise distance. Most authors assume $\Delta = \poly(n)$.

An $(\alpha,\beta)$-bi-criteria estimate $\tilde{r}$ for the cost of the $k$-center objective is a value satisfying

$$\OPTk_{\alpha k}(P) \leq \tilde{r}\leq \beta \OPTk_{k}(P).$$

In this section, we show that the folklore $2$-approximation algorithm for metric $k$-center (see \cite{kcenterbase}) can be naturally extended to the sublinear model (where sub-linearity is meant in terms of $nk$) for $k \in \omega(n^{1/3})$, through sublinear independent-set estimation.

\begin{lemma}
\label{lem}

Assume there exists a sublinear algorithm which, given adjacency-matrix query access to a graph $G$, returns a $(1+\epsilon)$-approximation to the size of a maximal independent set of $G$ in time $T(\epsilon)$.

Then there exists a sublinear algorithm returning a $(1+\epsilon,2+\epsilon)$-bi-criteria approximation to the metric $k$-center objective on $n$ points in time $O(T(\frac{\epsilon}{10}) \log(\Delta)/\epsilon)$.

\end{lemma}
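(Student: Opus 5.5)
We reduce to MIS-size estimation on threshold graphs and search over a geometric grid of radii. For a radius $r$, let $G_r$ be the graph on $P$ with an edge $(p,q)$ whenever $d(p,q)\le r$. An adjacency-matrix query to $G_r$ costs one distance query, so the assumed estimator runs on $G_r$ in time $T(\epsilon/10)$. The folklore facts from \cite{kcenterbase} that we need are the following.

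\begin{enumerate}
\item Any maximal independent set $I$ of $G_r$ is a dominating set. Hence $I$ is a $k'$-center solution of cost at most $r$ with $k'=|I|$, which gives $\OPTk_{|I|}(P)\le r$.
\item If $r\ge 2\OPTk_k(P)$, every maximal independent set of $G_r$ has size at most $k$. Indeed, two points of $I$ lying in the same optimal cluster are at distance at most $2\OPTk_k(P)\le r$ and would be adjacent.
\end{enumerate}

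The algorithm then proceeds as follows. Normalize distances so that the minimum pairwise distance is $1$. Consider the radii $r_j=(1+\epsilon/10)^j$ for $j=0,\dots,O(\log\Delta/\epsilon)$. For each $j$, run the estimator on $G_{r_j}$ with accuracy $\epsilon/10$, obtaining $\tilde\mu_j$ with $\tilde\mu_j\le |I_j|\le(1+\epsilon/10)\tilde\mu_j$, where $I_j=\text{RGMIS}(G_{r_j},\pi_j)$. Return $\tilde r=r_{j^*}$, where $j^*$ is the smallest index with $\tilde\mu_{j}\le k$. This uses $O(\log\Delta/\epsilon)$ calls, which matches the stated running time. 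If we want a high-probability guarantee across all calls, we boost each call's success probability and take a union bound, paying only logarithmic factors.

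For the upper bound, let $j'$ be the smallest index with $r_{j'}\ge 2\OPTk_k(P)$. Then $r_{j'}\le 2(1+\epsilon/10)\OPTk_k(P)\le(2+\epsilon)\OPTk_k(P)$. By fact 2, $\tilde\mu_{j'}\le|I_{j'}|\le k$, so $j^*\le j'$ and therefore $\tilde r\le(2+\epsilon)\OPTk_k(P)$.

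For the lower bound, $\tilde\mu_{j^*}\le k$ gives $|I_{j^*}|\le(1+\epsilon/10)k\le(1+\epsilon)k$. Fact 1 then yields $\OPTk_{(1+\epsilon)k}(P)\le r_{j^*}=\tilde r$.

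The main obstacle is non-monotonicity. The MIS size of $G_r$ is not monotone in $r$, and independent random permutations across different $r_j$ make it worse. A plain binary search over $j$ would therefore be unjustified. We avoid the issue by scanning every grid point, or equivalently by taking the smallest $j$ that passes the test. Correctness uses only the two per-radius facts together with the existence of $j'$, and nothing about the values at intermediate radii. Two edge cases remain. Distances equal to $0$ are handled by also testing $r=0$. The grid must reach the maximum distance, where $G_r$ is complete and the MIS has size $1$, so that $j'$ always exists.
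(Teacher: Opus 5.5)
Your proof is correct and follows the paper's argument essentially step for step: threshold graphs $G_r$, the same two folklore implications, a $(1+\epsilon/10)$-geometric grid of $O(\log(\Delta)/\epsilon)$ radii, and returning the smallest radius that passes the test. Your acceptance threshold $k$, versus the paper's $(1+\epsilon/10)k$, only tightens the center count. The one step to repair is \emph{normalize so that the minimum distance is $1$}, since the algorithm cannot find $d_{\min}$ without $\Omega(n^2)$ distance queries in general. As the paper does, anchor the grid at $d(p,p')/\Delta$ for an arbitrary pair $p\neq p'$, which lower-bounds every nonzero distance, and run it up to $2\Delta\, d(p,p') \ge 2\,\mathrm{diam}(P)$. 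This still uses $O(\log(\Delta)/\epsilon)$ radii, and unlike stopping at the diameter it guarantees that your index $j'$ exists, because $2\,\mathrm{OPT}_k(P)$ can exceed the diameter (for example when $k=1$).
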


\begin{proof}

For a given radius $r$, define the \emph{threshold graph} $G_r = (P,E_r)$, where there is an edge between points $p,p' \in P$ in $E_r$ if and only if $d(p,p') \leq r$. Let $\MISk(G_r)$ denote an arbitrary maximal independent set of $G_r$.

\begin{claim}
\label{cl}

The following two implications hold.

\begin{itemize}
\item If $|\MISk(G_r)| \leq k$ then $\OPTk_k(P) \leq r$.
\item If $\OPTk_k(P) \leq r/2$ then $|\MISk(G_r)| \leq k$.
\end{itemize}

\end{claim}

\begin{proof}

To see the first part of the claim, let $C=\MISk(G_r)$. Since $C$ is a maximal independent set of $G_r$, every point $p \in P \setminus C$ has a neighbor in $C$. Therefore, $d(p,C) \leq r$ for every point $p \in P$. Hence, if $|C|\leq k$, then $C$ is a feasible $k$-center solution of cost at most $r$, and so $\OPTk_k(P)\leq r$.

For the second part, assume $\OPTk_k(P) \leq r/2$, and let $C$ be an optimal solution to the metric $k$-center problem on $P$. Suppose, toward a contradiction, that there exists an independent set $S$ of $G_r$ of size greater than $k$. Since $|S|>|C|$, there must be two distinct points $p,p' \in S$ whose closest center in $C$ is the same point $c \in C$. But then $d(p,p') \leq d(p,c) + d(c,p') \leq r$, meaning that $(p,p') \in E_r$, contradicting the independence of $S$. Thus every independent set of $G_r$ has size at most $k$, and in particular $|\MISk(G_r)| \leq k$.

\end{proof}

We now describe the algorithm. Let $\eta=\epsilon/10$. The algorithm selects an arbitrary pair of distinct points $p,p' \in P$ and sets $r_0 = d(p,p')/\Delta$. By the definition of the aspect ratio, every nonzero pairwise distance in $P$ is at least $r_0$. It then considers the geometrically increasing set of radii $r_i = (1+\eta)^i \cdot r_0$ for $i \in [O(\log(\Delta)/\eta)]$, large enough so that the final tested radius is at least twice the diameter of $P$. For each radius $r_i$, the algorithm implicitly defines the threshold graph $G_{r_i}$ and runs the assumed MIS-size estimator on this graph to obtain an estimate $\widetilde{s}(r_i)$ satisfying
$$
\widetilde{s}(r_i) \leq |\MISk(G_{r_i})| \leq (1+\eta)\widetilde{s}(r_i).
$$
The algorithm returns the smallest radius $r_i$ satisfying
$$
\widetilde{s}(r_i) \leq (1+\eta)k.
$$

We prove correctness. First, consider any radius $r$ such that $\OPTk_k(P)\leq r/2$. By \cref{cl}, $|\MISk(G_r)|\leq k$. Since the estimator is an $\eta$-approximation in the sense above, we have
$$
\widetilde{s}(r)\leq |\MISk(G_r)|\leq k.
$$
Therefore $\widetilde{s}(r)\leq (1+\eta)k$, and so such a radius is accepted. Since the grid is multiplicative, there is a tested radius $r_i$ satisfying
$$
2\OPTk_k(P)\leq r_i\leq 2(1+\eta)\OPTk_k(P).
$$
This radius is accepted, and hence the returned radius $\tilde{r}$ satisfies
$$
\tilde{r}\leq 2(1+\eta)\OPTk_k(P).
$$

It remains to show the bi-criteria covering guarantee. Let $\tilde{r}$ be the radius returned by the algorithm. Since $\widetilde{s}(\tilde{r})\leq (1+\eta)k$ and the estimator is an $\eta$-approximation, we have
$$
|\MISk(G_{\tilde{r}})| \leq (1+\eta)\widetilde{s}(\tilde{r}) \leq (1+\eta)^2 k \leq (1+\epsilon)k.
$$
By maximality of $\MISk(G_{\tilde{r}})$, every point of $P$ is within distance at most $\tilde{r}$ of some vertex in $\MISk(G_{\tilde{r}})$. Hence $P$ can be covered by at most $(1+\epsilon)k$ centers of radius $\tilde{r}$, and therefore
$
\OPTk_{(1+\epsilon)k}(P) \leq \tilde{r}.
$

Combining the two inequalities gives
$$
\OPTk_{(1+\epsilon)k}(P) \leq \tilde{r}\leq 2(1+\eta)\OPTk_k(P).
$$
Since $\eta=\epsilon/10$, we have $2(1+\eta)\leq 2+\epsilon$, and after rescaling constants this gives the claimed bi-criteria guarantee. The number of tested radii is $O(\log(\Delta)/\epsilon)$, and each invocation of the MIS-size estimator costs $T(\frac{\epsilon}{10})$ time. Therefore the total running time is $O(T(\frac{\epsilon}{10}) \log(\Delta)/\epsilon)$.

\end{proof}

\paragraph{Acknowledgment.} We thank Sayan Bhattacharya for several helpful discussions.

Arash Kooroshnezhad is funded by the European Union (ERC grant, DYNALP, 101170133). Views and opinions expressed are however those of the author(s) only and do not necessarily reflect those of the European Union or the European Research Council Executive Agency. Neither the European Union nor the granting authority can be held responsible for them.




\bibliographystyle{alpha}
\bibliography{bibfile}

@inproceedings{BehnezhadDHSS19,
  author       = {Soheil Behnezhad and
                  Mahsa Derakhshan and
                  MohammadTaghi Hajiaghayi and
                  Cliff Stein and
                  Madhu Sudan},
  editor       = {David Zuckerman},
  title        = {Fully Dynamic Maximal Independent Set with Polylogarithmic Update
                  Time},
  booktitle    = {60th {IEEE} Annual Symposium on Foundations of Computer Science, {FOCS}
                  2019, Baltimore, Maryland, USA, November 9-12, 2019},
  pages        = {382--405},
  publisher    = {{IEEE} Computer Society},
  year         = {2019},
  url          = {https://doi.org/10.1109/FOCS.2019.00032},
  doi          = {10.1109/FOCS.2019.00032},
  bibsource    = {dblp computer science bibliography, https://dblp.org}
}

@inproceedings{YoshidaYMIH09,
author = {Yoshida, Yuichi and Yamamoto, Masaki and Ito, Hiro},
title = {An improved constant-time approximation algorithm for maximum~matchings},
year = {2009},
isbn = {9781605585062},
publisher = {Association for Computing Machinery},
address = {New York, NY, USA},
url = {https://doi.org/10.1145/1536414.1536447},
doi = {10.1145/1536414.1536447},
booktitle = {Proceedings of the Forty-First Annual ACM Symposium on Theory of Computing},
pages = {225–234},
numpages = {10},
location = {Bethesda, MD, USA},
series = {STOC '09}
}

@inproceedings{KapralovMNT20,
  author       = {Michael Kapralov and
                  Slobodan Mitrovic and
                  Ashkan Norouzi{-}Fard and
                  Jakab Tardos},
  editor       = {Shuchi Chawla},
  title        = {Space Efficient Approximation to Maximum Matching Size from Uniform
                  Edge Samples},
  booktitle    = {Proceedings of the 2020 {ACM-SIAM} Symposium on Discrete Algorithms,
                  {SODA} 2020, Salt Lake City, UT, USA, January 5-8, 2020},
  pages        = {1753--1772},
  publisher    = {{SIAM}},
  year         = {2020},
  url          = {https://doi.org/10.1137/1.9781611975994.107},
  doi          = {10.1137/1.9781611975994.107},
  bibsource    = {dblp computer science bibliography, https://dblp.org}
}

@inproceedings{Behnezhad21,
  author       = {Soheil Behnezhad},
  title        = {Time-Optimal Sublinear Algorithms for Matching and Vertex Cover},
  booktitle    = {62nd {IEEE} Annual Symposium on Foundations of Computer Science, {FOCS}
                  2021, Denver, CO, USA, February 7-10, 2022},
  pages        = {873--884},
  publisher    = {{IEEE}},
  year         = {2021},
  url          = {https://doi.org/10.1109/FOCS52979.2021.00089},
  doi          = {10.1109/FOCS52979.2021.00089},
  bibsource    = {dblp computer science bibliography, https://dblp.org}
}

@inproceedings{BehnezhadRR23b,
  author       = {Soheil Behnezhad and
                  Mohammad Roghani and
                  Aviad Rubinstein},
  editor       = {Barna Saha and
                  Rocco A. Servedio},
  title        = {Sublinear Time Algorithms and Complexity of Approximate Maximum Matching},
  booktitle    = {Proceedings of the 55th Annual {ACM} Symposium on Theory of Computing,
                  {STOC} 2023, Orlando, FL, USA, June 20-23, 2023},
  pages        = {267--280},
  publisher    = {{ACM}},
  year         = {2023},
  url          = {https://doi.org/10.1145/3564246.3585231},
  doi          = {10.1145/3564246.3585231},
  bibsource    = {dblp computer science bibliography, https://dblp.org}
}

@inproceedings{BehnezhadRRS23,
  author       = {Soheil Behnezhad and
                  Mohammad Roghani and
                  Aviad Rubinstein and
                  Amin Saberi},
  editor       = {Nikhil Bansal and
                  Viswanath Nagarajan},
  title        = {Beating Greedy Matching in Sublinear Time},
  booktitle    = {Proceedings of the 2023 {ACM-SIAM} Symposium on Discrete Algorithms,
                  {SODA} 2023, Florence, Italy, January 22-25, 2023},
  pages        = {3900--3945},
  publisher    = {{SIAM}},
  year         = {2023},
  url          = {https://doi.org/10.1137/1.9781611977554.ch151},
  doi          = {10.1137/1.9781611977554.CH151},
  bibsource    = {dblp computer science bibliography, https://dblp.org}
}

@inproceedings{BhattacharyaKS23a,
  author       = {Sayan Bhattacharya and
                  Peter Kiss and
                  Thatchaphol Saranurak},
  title        = {Dynamic (1+$\epsilon$)-Approximate Matching Size in Truly Sublinear
                  Update Time},
  booktitle    = {64th {IEEE} Annual Symposium on Foundations of Computer Science, {FOCS}
                  2023, Santa Cruz, CA, USA, November 6-9, 2023},
  pages        = {1563--1588},
  publisher    = {{IEEE}},
  year         = {2023},
  url          = {https://doi.org/10.1109/FOCS57990.2023.00095},
  doi          = {10.1109/FOCS57990.2023.00095},
  bibsource    = {dblp computer science bibliography, https://dblp.org}
}

@inproceedings{AzarmehrBRR25,
  author       = {Amir Azarmehr and
                  Soheil Behnezhad and
                  Mohammad Roghani and
                  Aviad Rubinstein},
  title        = {Tight Pair Query Lower Bounds for Matching and Earth Mover's
                  Distance},
  booktitle    = {66th {IEEE} Annual Symposium on Foundations of Computer Science, {FOCS}
                  2025, Sydney, Australia, December 14-17, 2025},
  pages        = {2666--2693},
  publisher    = {{IEEE}},
  year         = {2025},
  url          = {https://doi.org/10.1109/FOCS63196.2025.00138},
  doi          = {10.1109/FOCS63196.2025.00138},
  bibsource    = {dblp computer science bibliography, https://dblp.org}
}

@inproceedings{MahabadiRT25,
  author       = {Sepideh Mahabadi and
                  Mohammad Roghani and
                  Jakub Tarnawski},
  editor       = {Keren Censor{-}Hillel and
                  Fabrizio Grandoni and
                  Jo{\"{e}}l Ouaknine and
                  Gabriele Puppis},
  title        = {A 0.51-Approximation of Maximum Matching in Sublinear n{\^{}}\{1.5\}
                  Time},
  booktitle    = {52nd International Colloquium on Automata, Languages, and Programming,
                  {ICALP} 2025, Aarhus, Denmark, July 8-11, 2025},
  series       = {LIPIcs},
  volume       = {334},
  pages        = {116:1--116:17},
  publisher    = {Schloss Dagstuhl - Leibniz-Zentrum f{\"{u}}r Informatik},
  year         = {2025},
  url          = {https://doi.org/10.4230/LIPIcs.ICALP.2025.116},
  doi          = {10.4230/LIPICS.ICALP.2025.116},
  bibsource    = {dblp computer science bibliography, https://dblp.org}
}

@inproceedings{BehnezhadRR24,
  author       = {Soheil Behnezhad and
                  Mohammad Roghani and
                  Aviad Rubinstein},
  editor       = {Bojan Mohar and
                  Igor Shinkar and
                  Ryan O'Donnell},
  title        = {Approximating Maximum Matching Requires Almost Quadratic Time},
  booktitle    = {Proceedings of the 56th Annual {ACM} Symposium on Theory of Computing,
                  {STOC} 2024, Vancouver, BC, Canada, June 24-28, 2024},
  pages        = {444--454},
  publisher    = {{ACM}},
  year         = {2024},
  url          = {https://doi.org/10.1145/3618260.3649785},
  doi          = {10.1145/3618260.3649785},
  bibsource    = {dblp computer science bibliography, https://dblp.org}
}

@inproceedings{BehnezhadRR23a,
  author       = {Soheil Behnezhad and
                  Mohammad Roghani and
                  Aviad Rubinstein},
  title        = {Local Computation Algorithms for Maximum Matching: New Lower Bounds},
  booktitle    = {64th {IEEE} Annual Symposium on Foundations of Computer Science, {FOCS}
                  2023, Santa Cruz, CA, USA, November 6-9, 2023},
  pages        = {2322--2335},
  publisher    = {{IEEE}},
  year         = {2023},
  url          = {https://doi.org/10.1109/FOCS57990.2023.00143},
  doi          = {10.1109/FOCS57990.2023.00143},
  bibsource    = {dblp computer science bibliography, https://dblp.org}
}

@inproceedings{BhattacharyaKS23b,
  author       = {Sayan Bhattacharya and
                  Peter Kiss and
                  Thatchaphol Saranurak},
  editor       = {Barna Saha and
                  Rocco A. Servedio},
  title        = {Sublinear Algorithms for (1.5+{\(\epsilon\)})-Approximate Matching},
  booktitle    = {Proceedings of the 55th Annual {ACM} Symposium on Theory of Computing,
                  {STOC} 2023, Orlando, FL, USA, June 20-23, 2023},
  pages        = {254--266},
  publisher    = {{ACM}},
  year         = {2023},
  url          = {https://doi.org/10.1145/3564246.3585252},
  doi          = {10.1145/3564246.3585252},
  bibsource    = {dblp computer science bibliography, https://dblp.org}
}

@inproceedings{ChenKK20,
  author       = {Yu Chen and
                  Sampath Kannan and
                  Sanjeev Khanna},
  editor       = {Artur Czumaj and
                  Anuj Dawar and
                  Emanuela Merelli},
  title        = {Sublinear Algorithms and Lower Bounds for Metric {TSP} Cost Estimation},
  booktitle    = {47th International Colloquium on Automata, Languages, and Programming,
                  {ICALP} 2020, Saarbr{\"{u}}cken, Germany (Virtual Conference),
                  July 8-11, 2020},
  series       = {LIPIcs},
  volume       = {168},
  pages        = {30:1--30:19},
  publisher    = {Schloss Dagstuhl - Leibniz-Zentrum f{\"{u}}r Informatik},
  year         = {2020},
  url          = {https://doi.org/10.4230/LIPIcs.ICALP.2020.30},
  doi          = {10.4230/LIPICS.ICALP.2020.30},
  bibsource    = {dblp computer science bibliography, https://dblp.org}
}

@inproceedings{0039KT23b,
  author       = {Yu Chen and
                  Sanjeev Khanna and
                  Zihan Tan},
  editor       = {Kousha Etessami and
                  Uriel Feige and
                  Gabriele Puppis},
  title        = {Sublinear Algorithms and Lower Bounds for Estimating {MST} and {TSP}
                  Cost in General Metrics},
  booktitle    = {50th International Colloquium on Automata, Languages, and Programming,
                  {ICALP} 2023, Paderborn, Germany, July 10-14, 2023},
  series       = {LIPIcs},
  volume       = {261},
  pages        = {37:1--37:16},
  publisher    = {Schloss Dagstuhl - Leibniz-Zentrum f{\"{u}}r Informatik},
  year         = {2023},
  url          = {https://doi.org/10.4230/LIPIcs.ICALP.2023.37},
  doi          = {10.4230/LIPICS.ICALP.2023.37},
  bibsource    = {dblp computer science bibliography, https://dblp.org}
}

@inproceedings{BehnezhadRRS24,
  author       = {Soheil Behnezhad and
                  Mohammad Roghani and
                  Aviad Rubinstein and
                  Amin Saberi},
  editor       = {Karl Bringmann and
                  Martin Grohe and
                  Gabriele Puppis and
                  Ola Svensson},
  title        = {Sublinear Algorithms for {TSP} via Path Covers},
  booktitle    = {51st International Colloquium on Automata, Languages, and Programming,
                  {ICALP} 2024, Tallinn, Estonia, July 8-12, 2024},
  series       = {LIPIcs},
  volume       = {297},
  pages        = {19:1--19:16},
  publisher    = {Schloss Dagstuhl - Leibniz-Zentrum f{\"{u}}r Informatik},
  year         = {2024},
  url          = {https://doi.org/10.4230/LIPIcs.ICALP.2024.19},
  doi          = {10.4230/LIPICS.ICALP.2024.19},
  bibsource    = {dblp computer science bibliography, https://dblp.org}
}

@article{CzumajS09,
  author       = {Artur Czumaj and
                  Christian Sohler},
  title        = {Estimating the Weight of Metric Minimum Spanning Trees in Sublinear
                  Time},
  journal      = {{SIAM} J. Comput.},
  volume       = {39},
  number       = {3},
  pages        = {904--922},
  year         = {2009},
  url          = {https://doi.org/10.1137/060672121},
  doi          = {10.1137/060672121},
  bibsource    = {dblp computer science bibliography, https://dblp.org}
}

@article{ChazelleRT05,
  author       = {Bernard Chazelle and
                  Ronitt Rubinfeld and
                  Luca Trevisan},
  title        = {Approximating the Minimum Spanning Tree Weight in Sublinear Time},
  journal      = {{SIAM} J. Comput.},
  volume       = {34},
  number       = {6},
  pages        = {1370--1379},
  year         = {2005},
  url          = {https://doi.org/10.1137/S0097539702403244},
  doi          = {10.1137/S0097539702403244},
  bibsource    = {dblp computer science bibliography, https://dblp.org}
}

@inproceedings{LenzenL18,
  author       = {Christoph Lenzen and
                  Reut Levi},
  editor       = {Ioannis Chatzigiannakis and
                  Christos Kaklamanis and
                  D{\'{a}}niel Marx and
                  Donald Sannella},
  title        = {A Centralized Local Algorithm for the Sparse Spanning Graph Problem},
  booktitle    = {45th International Colloquium on Automata, Languages, and Programming,
                  {ICALP} 2018, Prague, Czech Republic, July 9-13, 2018},
  series       = {LIPIcs},
  volume       = {107},
  pages        = {87:1--87:14},
  publisher    = {Schloss Dagstuhl - Leibniz-Zentrum f{\"{u}}r Informatik},
  year         = {2018},
  url          = {https://doi.org/10.4230/LIPIcs.ICALP.2018.87},
  doi          = {10.4230/LIPICS.ICALP.2018.87},
  bibsource    = {dblp computer science bibliography, https://dblp.org}
}

@inproceedings{ArvivCLP23,
  author       = {Rubi Arviv and
                  Lily Chung and
                  Reut Levi and
                  Edward Pyne},
  editor       = {Nicole Megow and
                  Adam D. Smith},
  title        = {Improved Local Computation Algorithms for Constructing Spanners},
  booktitle    = {Approximation, Randomization, and Combinatorial Optimization. Algorithms
                  and Techniques, {APPROX/RANDOM} 2023, Atlanta, Georgia, USA, September
                  11-13, 2023},
  series       = {LIPIcs},
  volume       = {275},
  pages        = {42:1--42:23},
  publisher    = {Schloss Dagstuhl - Leibniz-Zentrum f{\"{u}}r Informatik},
  year         = {2023},
  url          = {https://doi.org/10.4230/LIPIcs.APPROX/RANDOM.2023.42},
  doi          = {10.4230/LIPICS.APPROX/RANDOM.2023.42},
  bibsource    = {dblp computer science bibliography, https://dblp.org}
}

@inproceedings{ParterRVY19,
  author       = {Merav Parter and
                  Ronitt Rubinfeld and
                  Ali Vakilian and
                  Anak Yodpinyanee},
  editor       = {Avrim Blum},
  title        = {Local Computation Algorithms for Spanners},
  booktitle    = {10th Innovations in Theoretical Computer Science Conference, {ITCS}
                  2019, San Diego, California, USA, January 10-12, 2019},
  series       = {LIPIcs},
  volume       = {124},
  pages        = {58:1--58:21},
  publisher    = {Schloss Dagstuhl - Leibniz-Zentrum f{\"{u}}r Informatik},
  year         = {2019},
  url          = {https://doi.org/10.4230/LIPIcs.ITCS.2019.58},
  doi          = {10.4230/LIPICS.ITCS.2019.58},
  bibsource    = {dblp computer science bibliography, https://dblp.org}
}

@inproceedings{0039KT23,
  author       = {Yu Chen and
                  Sanjeev Khanna and
                  Zihan Tan},
  editor       = {Nikhil Bansal and
                  Viswanath Nagarajan},
  title        = {Query Complexity of the Metric Steiner Tree Problem},
  booktitle    = {Proceedings of the 2023 {ACM-SIAM} Symposium on Discrete Algorithms,
                  {SODA} 2023, Florence, Italy, January 22-25, 2023},
  pages        = {4893--4935},
  publisher    = {{SIAM}},
  year         = {2023},
  url          = {https://doi.org/10.1137/1.9781611977554.ch179},
  doi          = {10.1137/1.9781611977554.CH179},
  bibsource    = {dblp computer science bibliography, https://dblp.org}
}

@inproceedings{MahabadiRTV25b,
  author       = {Sepideh Mahabadi and
                  Mohammad Roghani and
                  Jakub Tarnawski and
                  Ali Vakilian},
  editor       = {Raghu Meka},
  title        = {Sublinear Metric Steiner Tree via Improved Bounds for Set Cover},
  booktitle    = {16th Innovations in Theoretical Computer Science Conference, {ITCS}
                  2025, Columbia University, New York, NY, USA, January 7-10, 2025},
  series       = {LIPIcs},
  volume       = {325},
  pages        = {74:1--74:24},
  publisher    = {Schloss Dagstuhl - Leibniz-Zentrum f{\"{u}}r Informatik},
  year         = {2025},
  url          = {https://doi.org/10.4230/LIPIcs.ITCS.2025.74},
  doi          = {10.4230/LIPICS.ITCS.2025.74},
  bibsource    = {dblp computer science bibliography, https://dblp.org}
}

@inproceedings{NguyenO08,
  author       = {Huy N. Nguyen and
                  Krzysztof Onak},
  title        = {Constant-Time Approximation Algorithms via Local Improvements},
  booktitle    = {49th Annual {IEEE} Symposium on Foundations of Computer Science, {FOCS}
                  2008, Philadelphia, PA, USA, October 25-28, 2008},
  pages        = {327--336},
  publisher    = {{IEEE} Computer Society},
  year         = {2008},
  url          = {https://doi.org/10.1109/FOCS.2008.81},
  doi          = {10.1109/FOCS.2008.81},
  bibsource    = {dblp computer science bibliography, https://dblp.org}
}

@inproceedings{000122,
  author       = {Mohsen Ghaffari},
  title        = {Local Computation of Maximal Independent Set},
  booktitle    = {63rd {IEEE} Annual Symposium on Foundations of Computer Science, {FOCS}
                  2022, Denver, CO, USA, October 31 - November 3, 2022},
  pages        = {438--449},
  publisher    = {{IEEE}},
  year         = {2022},
  url          = {https://doi.org/10.1109/FOCS54457.2022.00049},
  doi          = {10.1109/FOCS54457.2022.00049},
  bibsource    = {dblp computer science bibliography, https://dblp.org}
}

@inproceedings{RubinfeldTVX11,
  author       = {Ronitt Rubinfeld and
                  Gil Tamir and
                  Shai Vardi and
                  Ning Xie},
  editor       = {Bernard Chazelle},
  title        = {Fast Local Computation Algorithms},
  booktitle    = {Innovations in Computer Science - {ICS} 2011, Tsinghua University,
                  Beijing, China, January 7-9, 2011. Proceedings},
  pages        = {223--238},
  publisher    = {Tsinghua University Press},
  year         = {2011},
  url          = {http://conference.iiis.tsinghua.edu.cn/ICS2011/content/papers/36.html},
  bibsource    = {dblp computer science bibliography, https://dblp.org}
}

@inproceedings{AlonRVX12,
  author       = {Noga Alon and
                  Ronitt Rubinfeld and
                  Shai Vardi and
                  Ning Xie},
  editor       = {Yuval Rabani},
  title        = {Space-efficient local computation algorithms},
  booktitle    = {Proceedings of the Twenty-Third Annual {ACM-SIAM} Symposium on Discrete
                  Algorithms, {SODA} 2012, Kyoto, Japan, January 17-19, 2012},
  pages        = {1132--1139},
  publisher    = {{SIAM}},
  year         = {2012},
  url          = {https://doi.org/10.1137/1.9781611973099.89},
  doi          = {10.1137/1.9781611973099.89},
  bibsource    = {dblp computer science bibliography, https://dblp.org}
}

@article{ReingoldV16,
  author       = {Omer Reingold and
                  Shai Vardi},
  title        = {New techniques and tighter bounds for local computation algorithms},
  journal      = {J. Comput. Syst. Sci.},
  volume       = {82},
  number       = {7},
  pages        = {1180--1200},
  year         = {2016},
  url          = {https://doi.org/10.1016/j.jcss.2016.05.007},
  doi          = {10.1016/J.JCSS.2016.05.007},
  bibsource    = {dblp computer science bibliography, https://dblp.org}
}

@article{LeviRY17,
  author       = {Reut Levi and
                  Ronitt Rubinfeld and
                  Anak Yodpinyanee},
  title        = {Local Computation Algorithms for Graphs of Non-constant Degrees},
  journal      = {Algorithmica},
  volume       = {77},
  number       = {4},
  pages        = {971--994},
  year         = {2017},
  url          = {https://doi.org/10.1007/s00453-016-0126-y},
  doi          = {10.1007/S00453-016-0126-Y},
  bibsource    = {dblp computer science bibliography, https://dblp.org}
}

@inproceedings{GhaffariU19,
  author       = {Mohsen Ghaffari and
                  Jara Uitto},
  editor       = {Timothy M. Chan},
  title        = {Sparsifying Distributed Algorithms with Ramifications in Massively
                  Parallel Computation and Centralized Local Computation},
  booktitle    = {Proceedings of the Thirtieth Annual {ACM-SIAM} Symposium on Discrete
                  Algorithms, {SODA} 2019, San Diego, California, USA, January 6-9,
                  2019},
  pages        = {1636--1653},
  publisher    = {{SIAM}},
  year         = {2019},
  url          = {https://doi.org/10.1137/1.9781611975482.99},
  doi          = {10.1137/1.9781611975482.99},
  bibsource    = {dblp computer science bibliography, https://dblp.org}
}

@article{EvenMR18,
  author       = {Guy Even and
                  Moti Medina and
                  Dana Ron},
  title        = {Best of two local models: Centralized local and distributed local
                  algorithms},
  journal      = {Inf. Comput.},
  volume       = {262},
  pages        = {69--89},
  year         = {2018},
  url          = {https://doi.org/10.1016/j.ic.2018.07.001},
  doi          = {10.1016/J.IC.2018.07.001},
  bibsource    = {dblp computer science bibliography, https://dblp.org}
}

@article{LeviM17,
  author       = {Reut Levi and
                  Moti Medina},
  title        = {A (Centralized) Local Guide},
  journal      = {Bull. {EATCS}},
  volume       = {122},
  year         = {2017},
  url          = {http://eatcs.org/beatcs/index.php/beatcs/article/view/495},
  bibsource    = {dblp computer science bibliography, https://dblp.org}
}

@article{kcenterbase,
author = {Hochbaum, Dorit S. and Shmoys, David B.},
title = {A unified approach to approximation algorithms for bottleneck problems},
year = {1986},
issue_date = {July 1986},
publisher = {Association for Computing Machinery},
address = {New York, NY, USA},
volume = {33},
number = {3},
issn = {0004-5411},
url = {https://doi.org/10.1145/5925.5933},
doi = {10.1145/5925.5933},
journal = {J. ACM},
month = may,
pages = {533–550},
numpages = {18}
}

@inproceedings{BateniEFHJMW23,
  author       = {MohammadHossein Bateni and
                  Hossein Esfandiari and
                  Hendrik Fichtenberger and
                  Monika Henzinger and
                  Rajesh Jayaram and
                  Vahab Mirrokni and
                  Andreas Wiese},
  editor       = {Nikhil Bansal and
                  Viswanath Nagarajan},
  title        = {Optimal Fully Dynamic \emph{k}-Center Clustering for Adaptive and
                  Oblivious Adversaries},
  booktitle    = {Proceedings of the 2023 {ACM-SIAM} Symposium on Discrete Algorithms,
                  {SODA} 2023, Florence, Italy, January 22-25, 2023},
  pages        = {2677--2727},
  publisher    = {{SIAM}},
  year         = {2023},
  url          = {https://doi.org/10.1137/1.9781611977554.ch101},
  doi          = {10.1137/1.9781611977554.CH101},
  bibsource    = {dblp computer science bibliography, https://dblp.org}
}

@article{abs-2112-07050,
  author       = {MohammadHossein Bateni and
                  Hossein Esfandiari and
                  Rajesh Jayaram and
                  Vahab S. Mirrokni},
  title        = {Optimal Fully Dynamic k-Centers Clustering},
  journal      = {CoRR},
  volume       = {abs/2112.07050},
  year         = {2021},
  url          = {https://arxiv.org/abs/2112.07050},
  eprinttype   = {arXiv},
  eprint       = {2112.07050},
  bibsource    = {dblp computer science bibliography, https://dblp.org}
}

@article{Cohen-AddadFS21,
  author       = {Vincent Cohen{-}Addad and
                  Andreas Emil Feldmann and
                  David Saulpic},
  title        = {Near-linear Time Approximation Schemes for Clustering in Doubling
                  Metrics},
  journal      = {J. {ACM}},
  volume       = {68},
  number       = {6},
  pages        = {44:1--44:34},
  year         = {2021},
  url          = {https://doi.org/10.1145/3477541},
  doi          = {10.1145/3477541},
  bibsource    = {dblp computer science bibliography, https://dblp.org}
}

@inproceedings{PanconesiS92,
  author       = {Alessandro Panconesi and
                  Aravind Srinivasan},
  editor       = {S. Rao Kosaraju and
                  Mike Fellows and
                  Avi Wigderson and
                  John A. Ellis},
  title        = {Improved Distributed Algorithms for Coloring and Network Decomposition
                  Problems},
  booktitle    = {Proceedings of the 24th Annual {ACM} Symposium on Theory of Computing,
                  May 4-6, 1992, Victoria, British Columbia, Canada},
  pages        = {581--592},
  publisher    = {{ACM}},
  year         = {1992},
  url          = {https://doi.org/10.1145/129712.129769},
  doi          = {10.1145/129712.129769},
  bibsource    = {dblp computer science bibliography, https://dblp.org}
}

@article{BarenboimEPS16,
  author       = {Leonid Barenboim and
                  Michael Elkin and
                  Seth Pettie and
                  Johannes Schneider},
  title        = {The Locality of Distributed Symmetry Breaking},
  journal      = {J. {ACM}},
  volume       = {63},
  number       = {3},
  pages        = {20:1--20:45},
  year         = {2016},
  url          = {https://doi.org/10.1145/2903137},
  doi          = {10.1145/2903137},
  bibsource    = {dblp computer science bibliography, https://dblp.org}
}

@inproceedings{AwerbuchGLP89,
  author       = {Baruch Awerbuch and
                  Andrew V. Goldberg and
                  Michael Luby and
                  Serge A. Plotkin},
  title        = {Network Decomposition and Locality in Distributed Computation},
  booktitle    = {30th Annual Symposium on Foundations of Computer Science, Research
                  Triangle Park, North Carolina, USA, 30 October - 1 November 1989},
  pages        = {364--369},
  publisher    = {{IEEE} Computer Society},
  year         = {1989},
  url          = {https://doi.org/10.1109/SFCS.1989.63504},
  doi          = {10.1109/SFCS.1989.63504},
  bibsource    = {dblp computer science bibliography, https://dblp.org}
}

@inproceedings{RozhonG20,
  author       = {V{\'{a}}clav Rozhon and
                  Mohsen Ghaffari},
  editor       = {Konstantin Makarychev and
                  Yury Makarychev and
                  Madhur Tulsiani and
                  Gautam Kamath and
                  Julia Chuzhoy},
  title        = {Polylogarithmic-time deterministic network decomposition and distributed
                  derandomization},
  booktitle    = {Proceedings of the 52nd Annual {ACM} {SIGACT} Symposium on Theory
                  of Computing, {STOC} 2020, Chicago, IL, USA, June 22-26, 2020},
  pages        = {350--363},
  publisher    = {{ACM}},
  year         = {2020},
  url          = {https://doi.org/10.1145/3357713.3384298},
  doi          = {10.1145/3357713.3384298},
  bibsource    = {dblp computer science bibliography, https://dblp.org}
}

@article{BalliuBHORS21,
  author       = {Alkida Balliu and
                  Sebastian Brandt and
                  Juho Hirvonen and
                  Dennis Olivetti and
                  Mika{\"{e}}l Rabie and
                  Jukka Suomela},
  title        = {Lower Bounds for Maximal Matchings and Maximal Independent Sets},
  journal      = {J. {ACM}},
  volume       = {68},
  number       = {5},
  pages        = {39:1--39:30},
  year         = {2021},
  url          = {https://doi.org/10.1145/3461458},
  doi          = {10.1145/3461458},
  bibsource    = {dblp computer science bibliography, https://dblp.org}
}

@inproceedings{KothapalliP12,
  author       = {Kishore Kothapalli and
                  Sriram V. Pemmaraju},
  editor       = {Deepak D'Souza and
                  Telikepalli Kavitha and
                  Jaikumar Radhakrishnan},
  title        = {Super-Fast 3-Ruling Sets},
  booktitle    = {{IARCS} Annual Conference on Foundations of Software Technology and
                  Theoretical Computer Science, {FSTTCS} 2012, Hyderabad, India, December
                  15-17, 2012},
  series       = {LIPIcs},
  volume       = {18},
  pages        = {136--147},
  publisher    = {Schloss Dagstuhl - Leibniz-Zentrum f{\"{u}}r Informatik},
  year         = {2012},
  url          = {https://doi.org/10.4230/LIPIcs.FSTTCS.2012.136},
  doi          = {10.4230/LIPICS.FSTTCS.2012.136},
  bibsource    = {dblp computer science bibliography, https://dblp.org}
}

@article{BaswanaS07,
  author       = {Surender Baswana and
                  Sandeep Sen},
  title        = {A simple and linear time randomized algorithm for computing sparse
                  spanners in weighted graphs},
  journal      = {Random Struct. Algorithms},
  volume       = {30},
  number       = {4},
  pages        = {532--563},
  year         = {2007},
  url          = {https://doi.org/10.1002/rsa.20130},
  doi          = {10.1002/RSA.20130},
  bibsource    = {dblp computer science bibliography, https://dblp.org}
}

@article{AhmedBSHJKS20,
  author       = {Abu Reyan Ahmed and
                  Greg Bodwin and
                  Faryad Darabi Sahneh and
                  Keaton Hamm and
                  Mohammad Javad Latifi Jebelli and
                  Stephen G. Kobourov and
                  Richard Spence},
  title        = {Graph spanners: {A} tutorial review},
  journal      = {Comput. Sci. Rev.},
  volume       = {37},
  pages        = {100253},
  year         = {2020},
  url          = {https://doi.org/10.1016/j.cosrev.2020.100253},
  doi          = {10.1016/J.COSREV.2020.100253},
  bibsource    = {dblp computer science bibliography, https://dblp.org}
}

@article{JainV01,
  author       = {Kamal Jain and
                  Vijay V. Vazirani},
  title        = {Approximation algorithms for metric facility location and \emph{k}-Median
                  problems using the primal-dual schema and Lagrangian relaxation},
  journal      = {J. {ACM}},
  volume       = {48},
  number       = {2},
  pages        = {274--296},
  year         = {2001},
  url          = {https://doi.org/10.1145/375827.375845},
  doi          = {10.1145/375827.375845},
  bibsource    = {dblp computer science bibliography, https://dblp.org}
}

@inproceedings{MahabadiRTV25,
  author       = {Sepideh Mahabadi and
                  Mohammad Roghani and
                  Jakub Tarnawski and
                  Ali Vakilian},
  editor       = {Kasper Green Larsen and
                  Barna Saha},
  title        = {Sublinear Metric Steiner Forest via Maximal Independent Set},
  booktitle    = {Proceedings of the 2026 Annual {ACM-SIAM} Symposium on Discrete Algorithms,
                  {SODA} 2026, Vancouver, BC, Canada, January 11-14, 2026},
  pages        = {5941--5959},
  publisher    = {{SIAM}},
  year         = {2026},
  url          = {https://doi.org/10.1137/1.9781611978971.211},
  doi          = {10.1137/1.9781611978971.211},
  bibsource    = {dblp computer science bibliography, https://dblp.org}
}

@misc{mahabadi2026improvedsublinearalgorithmsmaximal,
      title={Improved Sublinear Algorithms for Maximal Independent Set and Metric Steiner Forest}, 
      author={Sepideh Mahabadi and Jakub Tarnawski},
      year={2026},
      eprint={2610.06053},
      archivePrefix={arXiv},
      primaryClass={cs.DS},
      url={https://arxiv.org/abs/2610.06053}, 
}


\end{document}